\newtheorem{lemma}{Lemma}[section]
\newtheorem{proposition}[lemma]{Proposition}
\newtheorem{theorem}[lemma]{Theorem}
\newtheorem{corollary}[lemma]{Corollary}
\theoremstyle{definition}
\newtheorem{example}[lemma]{Example}
\newtheorem{definition}[lemma]{Definition}
\newtheorem{remark}[lemma]{Remark}
\newcommand{\ar}[1]{\mathrm{ar}(#1)}
\newcommand{\var}[1]{\mathrm{var}(#1)}
\newcommand{\oids}[1]{{\mathrm{oids}(#1)}}
\newcommand{\consts}[1]{{\mathrm{consts}(#1)}}
\newcommand{\sotgd}[1]{\mathrm{sotgd}(#1)}
\newcommand{\ntgd}[1]{\mathrm{ntgd}(#1)}
\newcommand{\adom}[1]{\mathrm{adom}(#1)}
\newcommand{\dom}{\mathbf{dom}}
\newcommand{\id}{\mathrm{id}}
\newcommand{\tpl}[1]{\bar{#1}}		
\newcommand{\set}[1]{\{\,#1\,\}}        
\newcommand{\col}{\colon}
\newcommand{\schema}[1]{{\textbf{#1}}}
\newcommand{\scenario}[1]{{\cal #1}}
\newcommand\M{\scenario M}
\newcommand{\fltg}[1]{\mathring{#1}}
\newcommand{\Mat}[2]{\mathrm{Mat}(#1,#2)}
\renewcommand{\setminus}-
\begin{document}


\title{Mapping-equivalence and oid-equivalence of single-function
object-creating conjunctive queries}

\author{Angela Bonifati         \and
        Werner Nutt \and
	Riccardo Torlone\and
        Jan Van den Bussche 
}

\maketitle

\begin{abstract}

Conjunctive database queries have been extended with a mechanism
for object creation to capture important applications such as
data exchange, data integration, and ontology-based data access.
Object creation generates new object identifiers in the result,
that do not belong to the set of constants in the source
database. The new object identifiers can be also seen as Skolem
terms.  Hence, object-creating conjunctive que\-ries can also be
regarded as restricted second-order tuple-generating dependencies
(SO tgds), considered in the data exchange literature.

In this paper, we focus on the class of single-function
object-creating conjunctive queries, or sifo CQs for
short.  The single function symbol can be used only once in the
head of the query.  We
give a new characterization for oid-equiv\-a\-lence of sifo CQs that
is simpler than the one given by Hull and Yoshikawa and places
the problem in the complexity class NP\@.  Our characterization
is based on Cohen's equivalence notions for conjunctive queries
with multiplicities.  We also solve the logical entailment
problem for sifo CQs, showing that also this problem belongs to
NP\@.  Results by Pichler et al.\ have shown that logical
equivalence for more general classes of SO tgds is either
undecidable or decidable with as yet unknown complexity upper
bounds.

\end{abstract}

\section{Introduction}
\label{sec:intro}

Conjunctive queries form a natural class of database queries,
which can be defined by combinations of selection,
renaming, natural join, and projection.  Much of the research on
database query processing is focused on conjunctive queries;
moreover, these queries are ame\-na\-ble to advanced optimizations
because containment of conjunctive queries is decidable (though
NP-complete).  In this paper, we are interested in conjunctive
queries extended with a facility for object creation.

Object creation, also called oid generation or value invention,
has been repeatedly proposed and investigated as a feature of
query languages.  This has happened in several contexts: high
expressiveness \cite{av_proc,av_datalog,vvag_compl}; object
orientation
\cite{ak_iql,vdbp_abstr,hy_ilog,kw_pods89,maier_ologic}; data
integration \cite{tsimmis}; semi-structured data and XML
\cite{abs_book}; and data exchange
\cite{miller_invention,clio_overview,sotgd}.  In a logic-based
approach, object creation is typically achieved through the use
of Skolem functions \cite{hy_ilog,kw_pods89,maier_ologic}.

In the present paper, we consider conjunctive queries (CQs)
extended with object creation through the use of a single Skolem
function, which can be used only
once in the head of the query. We refer to such a query as a `sifo CQ' (for
\underline{si}ngle-\underline function \underline
object-creating). 
The following example of a sifo CQ uses a Skolem
function~$f$:
$$
Q:\mathit{Family}(c,f(x,y)) \gets \mathit{Mother}(c,x),
\mathit{Father}(c,y).
$$
The query introduces a new oid $f(x,y)$ for
every pair $(x,y)$ of a woman $x$ and a man $y$ who have at least
one child together; all children $c$ of $x$ and $y$ are linked to the
new oid in the result of the query (a relation called
$\mathit{Family}$).
As an example, if $\mathit{Mother(beth,anne)}$ and $\mathit{Father(beth,adam)}$ are
two facts in the underlying database, then the result of the query includes the fact
$\mathit{Family(beth,f(anne,adam))}$, where $\mathit{f(anne,adam)}$ is the
newly created oid. This oid will be shared by all the children having $\mathit{anne}$ and $\mathit{adam}$ as parents.

In this paper, we first revisit the problem of checking oid-equivalence
of sifo CQs. Oid-equivalence has its origins in the
theory of object-creating queries introduced by
Abiteboul and Kanellakis~\cite{ak_iql}; it is the natural generalization
of query equivalence in the presence of object
creation.

Consider for instance the following sifo CQ:
$$Q':\mathit{Family}(c,g(x,y,x)) \gets \mathit{Mother}(c,x),
\mathit{Father}(c,y).$$
%
%
It is not hard to see that the result of $Q'$ has the same structure
as the result of the query $Q$ above. 
The query $Q'$ links all children $c$ of the parents $x$ and $y$ to the oid 
$g(x,y,x)$ that depends exactly on $x$ and $y$. 
That is, two children in the result of $Q$ are connected to the same oid if
and only if they are connected to same oid in $Q'$, although the oids
will be syntactically different.
Therefore, we can conclude that $Q$ and $Q'$ are oid-equivalent,
which means that their results are identical on any input up to a
simple isomorphism mapping the oids in one result to those in the other.

Hull and Yoshikawa \cite{hy_pods91} studied oid-equivalence (they
called it `obscured equivalence') for nonrecursive ILOG programs; the
decidability of this problem is a long-standing open question.
Nevertheless, for the case of `isolated oid creation', to which sifo
CQs belong, they have given a decidable characterization.

We give a new result relating oid-equivalence to equivalence
of classical conjunctive queries under `combined' bag-set semantics
\cite{cohen_multiplicities}, which models the evaluation of CQs when query results and
relations may contain duplicates of tuples.  As a corollary, we obtain that
oid-equivalence for sifo CQs belongs to NP, which does not follow
from the Hull-Yoshikawa test.
Obviously, then, oid-equiv\-a\-lence for sifo CQs is
NP-complete, since equivalence of classical CQs without
object creation is already NP-complete.

Object creation is receiving renewed interest in the
context of schema mappings~\cite{miller_invention,sotgd}, which are formalisms describing how data structured under a source schema are to be transformed into data structured under a target schema.  Hence, it is
instructive to view sifo CQs as schema mappings, simply by
interpreting them as implicational statements.
As an example, we may view query $Q$ above as an implicational
statement that relates a
query over relations $\mathit{Mother}$ and $\mathit{Father}$ in the source schema
to the relation $\mathit{Family}$ in the target schema.

For standard CQs
without object creation, two que\-ries are equivalent if and only
if they are logically equivalent as schema mappings
\cite{nash_optimization}.
For sifo CQs, we show that
oid-equivalence implies logical equivalence, while the converse is not true.

Sifo CQs viewed as schema mappings belong to the class of
so-called `nested dependencies' \cite{miller_invention}, which belong in
turn to the class of formulas called second-order tuple-generating dependencies
(SO-tgds \cite{sotgd}).
For instance, consider again the sifo CQ $Q$ above: it can be rewritten into the following 
SO-tgd:
$$
\exists f \forall x \forall y \forall c (\mathit{Mother}(c, x)
\land \mathit{Father} (c,y)
{} \to \mathit{Family}(c,f(x,y))),
$$
which is of second order because the function $f$ is existentially quantified.

Although logical equivalence of SO-tgds is undecidable
\cite{pichler_sotgd}, logical implication of nested
dependencies has recently been shown to be decidable
\cite{pichler_nestedep}.  We give a novel and elegant
characterization of logical
implication for sifo CQs which is simpler than the general
implication test for nested dependencies.  It turns out that
the problem belongs to NP\@.  Hence, logical implication for
sifo CQs has no worse complexity than containment for standard
CQs without object creation.

Summarizing, in this paper we provide the following contributions in the
area of query languages with object creation:
\begin{enumerate}
\item
We clarify the relationship between sifo CQs and other formalisms
in the literature, notably, the language ILOG \cite{hy_ilog}, second-order
tuple-generating dependencies \cite{sotgd}, and nested tuple-generating
dependencies \cite{miller_invention}.
\item
We relate the problem of oid-equivalence
for sifo CQs to the equivalence of classical conjunctive
queries under combined bag-set semantics, which implies its NP-completeness.
\item
We show that when sifo CQs are interpreted as schema mappings, oid-equivalence
implies logical equivalence but not vice versa.
\item
We provide a new characterization of logical implication for sifo CQs as object-creating queries showing  that this problem has the same complexity as deciding containment for classical CQs.
\end{enumerate}

This paper is organized as follows.
In Section~2 we review some practical applications of sifo CQs.
In Section~3 we formally define object-creating conjunctive
queries.  Section~4 is devoted to the results on oid equivalence.
Section~5 is devoted to the results on logical entailment.
In Section~6 we conclude by discussing
related work and topics for further research.

\section{Applications of sifo CQs}

In this section, we discuss further applications of sifo CQs, which may constitute important components
of many advanced database systems, spanning from information integration
and schema mapping engines along with their benchmarks,
to several Semantic Web tools.  We believe this shows that the
results in this article on equivalence and logical implication of
sifo CQs are relevant and contribute
to our understanding of how solutions for these applications can
be optimized.

GAV ({\em global-as-view}) schema mappings
\cite{FriedmanLM99,Lenzerini02,Ullman00} relate a query over the source
schema, represented by a body $B$ of a CQ, to an atomic element of the global
schema, represented by a head atom $H$
of a CQ\@. More precisely, a GAV mapping can be written as follows:
$$
T(\tpl x) \gets B
$$
where we use a relation symbol $T$ as the atomic head predicate.

GAV schema mappings have been used already in the 1990s in mediator
systems like Tsimmis \cite{PapakonstantinouGW95,Ullman00} or Information Manifold \cite{LevyRO96} for the
integration of heterogeneous data sources. In both systems, source facts are related
to facts over the global schema by means of queries.

Sifo CQs can naturally be seen as extensions of GAV mappings,
when one of the attributes of the global schema carries newly created identifiers.
For instance, the sifo CQ $Q$ from Section \ref{sec:intro} can
express a mapping from a source schema containing two relations
$\mathit{Mother}$ and $\mathit{Father}$ to one relation
$\mathit{Family}$ of a global schema, with created identifiers
for families appearing in the tuples in the result of the mapping. Thus, we
can also interpret $Q$ as an extended GAV schema mapping.

Another important application of sifo CQs are schema mapping
benchmarks allowing the users to compare and evaluate schema mapping
systems.  In particular, the flexibility of the
arguments of the Skolem functions used for object creation
has been advocated as one of the desirable features in recent benchmarks for schema mapping and information integration, such as STBenchmark
\cite{AlexeTV08}  and iBench \cite{ACGM15}.


More precisely, in the mapping primitives of iBench \cite{ACGM15}, an
extension of STBenchmark \cite{AlexeTV08} that supports SO-tgds, the
users can choose among two different skolemization strategies to fill
the arguments of the Skolem functions: 
\emph{fixed}, where the arguments of the function are pre-defined in a native 
mapping primitive,
or \emph{variable},
where one can further choose among the options 
\emph{All, Key}, and \emph{Random}, 
which generate mappings where 
all variables, 
the variables in the positions of the primary key, or
a random set of variables, respectively, 
are used as arguments of the function.

%
These skolemization strategies can be captured by sifo CQs as follows. 
In the query below:
$$
T(x, y, f(x, y, z, w)) \gets B (x, y, z, w)
$$
we can observe that the Skolem term uses all the source variables
in the body $B$ (option {\em All\/}).
If the attribute in the position of $x$ is a primary key for~$B$,
then the application of the option {\em Key\/} generates a mapping
that can be expressed by the sifo CQ 
$$
T(x, y, f(x)) \gets B (x, y, z, w).
$$
%

Alternatively, choosing the option {\em Random} may lead the iBench to
randomly select the attributes in the positions of $x$ and $z$, 
and then to generate the mapping represented by
$$
T(x, y, f(x, z)) \gets B (x, y, z, w).
$$

It is also worth highlighting that three out of the seven
mapping primitives in iBench that are novel with respect to
STBenchmark, namely {\em ADD} (copy a relation and ADD new
attributes), {\em ADL} (copy a relation, Add and DeLete
attributes in tandem) and {\em MA} (Merge and Add new attributes)
contain single Skolem functions. They correspond to the following sifo
CQs, respectively:
\begin{align*}
& T(x, y, f(x, y)) \gets B (x, y) \\
& T(x, f(x)) \gets B (x, y) \\
& T(x, y, z, f(x, y, z)) \gets B (x, y), T(y, z).
\end{align*}

A third significant application of sifo CQs is the Semantic Web, where sifo
CQs can be envisioned in at least two scenarios, namely in systems for
ontology-based data access (OBDA) and in direct mappings from the relational to
the RDF data format, under development at W3C.%
\footnote{\url{http://www.w3.org/TR/rdb-direct-mapping/}}
Indeed, newly created identifiers in the head of a sifo CQ can
serve as generated keys, or simply as newly invented values
needed to fill an attribute of a relation in the global schema.
As such, sifo CQs can be seen as examples of mapping assertions from source schemas to a
global ontology in OBDA \cite{PoggiLCGLR08}.
Typically, OBDA mapping assertions relate facts in relational source schemas to RDF
triples in a global ontology. The newly generated IRIs%
\footnote{IRIs stand for Internationalized Resource Identifiers and
  extend the syntax of URIs (Uniform Resource Identifiers) to a much
  wider repertoire of characters.  They naturally embody global
  identifiers that refer to the same resource on the Web and can be
  used across different mapping assertions to refer to that resource.}
in the RDF triples can be interpreted as skolemized values in the
global ontology.

A related application is the
direct translation of a
relational schema into OWL, which uses as an important building block the
creation of IRIs \cite{SequedaArenasMarcelo-WWW2012}.
In contrast to the previous application, this application handles
relational schemas that are not known in advance.
For each relation $r$ in a database schema,
Datalog-like rules can be used to generate an IRI
for the relation $r$ and an IRI for each attribute $a$ in $r$.
We take an example of a translation from a relational schema into OWL
and we show that, actually, these Datalog-like rules can be viewed as
sifo CQs, since they employ a single concatenation function to obtain
such IRIs (exemplified as $f$).
The corresponding sifo CQs are reported below:
\begin{align*}
& T_1(r, f(b, r)) \gets B_1(r ) \\
& T_2(a, r, f(b, r, a)) \gets B_2 (r, a),
\end{align*}
where $B_1$ and $B_2$ are conjunctive query
bodies retrieving relation names $r$ and
attribute names $a$ from the data dictionary of an underlying relational database, and where $b$ is a
string representing a given \emph{IRI base} (e.g., the string
`http://example.edu/db') for the same database to
be translated. Thus, the first query creates a new IRI for the
relation $r$, by concatenating $b$ with the relation symbol $r$, while
the second query returns the set of IRIs of the attributes
$a$ of $r$, by concatenating $b$ with the relation symbol $r$ and
its attribute symbols~$a$.




\section{Preliminaries} \label{prelim}

In this section we introduce our formalism for dealing with
conjunctive queries and introduce the notion of object-creating
conjunctive query, adapted from the language ILOG
\cite{hy_ilog}.

\subsection{Databases and conjunctive queries} \label{secdbcq}

From the outset we assume a supply of \emph{relation
names}, where each relation name $R$ has an associated arity
$\ar R$.  We also assume an infinite domain $\dom$ of
atom\-ic data elements called \emph{constants}.  A
\emph{fact} is of the form $R(a_1,\dots,a_k)$ where
$a_1$, \dots, $a_k$ are constants
and $R$ is a $k$-ary relation name.
We call $R$ the \emph{predicate} of the fact.

A \emph{database schema} $\schema S$ is a finite set of relation
names.  An \emph{instance} of $\schema S$ is a finite set of
facts with predicates from $\schema S$.  The set of all constants
appearing in an instance $I$ is called the \emph{active domain}
of $I$ and denoted by $\adom I$.

We further assume an infinite supply of \emph{variables},
disjoint from $\dom$.  An \emph{atom} is of the form
$R(x_1,\dots,x_k)$ where $x_1,\dots,x_k$ are variables and
$R$ is a $k$-ary relation name.  As with facts, we call $R$
the predicate of the atom.

We can now recall the classical notion of conjunctive query (CQ)
\cite{ahv_book,cm}.  Syntactically, a CQ over a database schema
$\schema S$ is of the form 
$$
  H \gets B,
$$
where $B$ is a finite
set of atoms with predicates from $\schema S$, and $H$ is an atom
with a predicate not in $\schema S$.  The set $B$ is called the
\emph{body} and $H$ is called the \emph{head}.  It is required
that every variable occurring in the head also occurs in the
body.  We denote the set of variables occurring in a set of atoms
$B$ (or a single atom $A$) by $\var B$ (or $\var A$).

The semantics of CQs is defined in terms of valuations.  A
\emph{valuation} is a mapping $\alpha : X \to \dom$ on some finite
set of variables $X$.  When $A$ is an atom with $\var A \subseteq
X$, we can apply $\alpha$ to $A$ simply by applying $\alpha$ to
every variable in $A$.  This results in a fact and is denoted by
$\alpha(A)$.  When $B$ is a set of atoms and $\alpha$ is a
valuation on $\var B$, we can apply $\alpha$ to $B$ by
applying $\alpha$ to every atom in $B$. Formally, $\alpha(B)$ is
defined as the instance $\{\alpha(A) \mid A \in B\}$.

When $I$ is an instance and $\alpha$ is a valuation on $\var B$
such that $\alpha(B) \subseteq I$,
we say that $\alpha$ is a \emph{matching} of $B$ in
$I$, and denote this by $\alpha : B \to I$.  Now when $Q$ is a
CQ $H \gets B$ and $I$ is an instance, the result of $Q$ on $I$
is defined as $$ Q(I) := \{\alpha(H) \mid \alpha : B \to I\}. $$

\subsection{Object-creating conjunctive queries}

Assume a finite vocabulary of \emph{function symbols} of various
arities. As with relation names, the arity of a function symbol $f$
is denoted by $\ar f$.

\emph{Data terms} are syntactical
expressions built up from
constants using function symbols.  Formally,
data terms are inductively defined as follows:
\begin{enumerate}
\item
Every constant is a data term;
\item
If $f$ is a $k$-ary function symbol and
$d_1,\dots,d_k$ are data terms, then the expression
$f(d_1,\dots,d_k)$ is also a data term.\footnote{Since constants are
atomic data elements, no constant is allowed to be of the form
$f(d_1,\dots,d_k)$.}
\end{enumerate}

An \emph{extended fact} is defined just like a fact, except that
it may contain data terms rather than only constants.  Formally,
an extended fact is of the form $R(d_1,\allowbreak
\dots,\allowbreak d_k)$, 
where $d_1,\dots,d_k$ are data terms and $R$ is a $k$-ary relation name.
The active domain of an extended fact $e =
R(d_1,\dots,d_k)$
is defined as $$ \adom e := \{d_1,\dots,d_k\}. $$
An \emph{extended instance} is a finite set of extended facts.
The active domain of an extended instance $J$ is defined as
$$ \adom J := \bigcup_{e \in J} \adom e. $$

\emph{Formula terms} are defined in the same way as data terms,
but are built up from variables rather than constants.
\emph{Extended atoms} are defined like atoms, but can
contain formula terms in addition to variables.
If $t$ is a formula term and $\alpha$ is a valuation
defined on all variables occurring in $t$, we can apply $\alpha$
to every variable occurrence in $t$, obtaining a data term
$\alpha(t)$.  Likewise, we can apply a valuation to an extended
atom, resulting in an extended fact.

We are now ready to define the syntax and semantics of
\emph{object-creating conjunctive queries (oCQ).}  Like a
classical CQ, an oCQ is of the form $H \gets B$.  The only
difference with a classical CQ is that $H$ can be an extended
atom; in particular, $B$ is still a finite set of ``flat'' atoms,
not extended atoms.  It is still required that $\var H \subseteq
\var B$.  The result of an oCQ $Q = H \gets B$ on an
instance $I$ is now an extended instance, defined as $$ Q(I) :=
\{\alpha(H) \mid \alpha : B \to I\}. $$

\begin{table}
\centering
\begin{tabular}[t]{|ll|}
\multicolumn{2}{c}{Mother} \\
\hline
beth & anne \\
ben & anne \\
eric & claire \\
emma & diana \\
dave & diana \\
\hline
\end{tabular}
\qquad
\begin{tabular}[t]{|ll|}
\multicolumn{2}{c}{Father} \\
\hline
beth & adam \\
ben & adam \\
eric & carl \\
emma & carl \\
\hline
\end{tabular}
\qquad
\begin{tabular}[t]{|ll|}
\multicolumn{2}{c}{Family} \\
\hline
beth & $f(\rm anne,adam)$ \\
ben & $f(\rm anne,adam)$ \\
eric & $f(\rm claire,carl)$ \\
emma & $f(\rm diana,carl)$ \\
\hline
\end{tabular}
\caption{Instances used in Example~\ref{exfamily}.}
\label{tabfamily}
\end{table}

\begin{table}
$$
\begin{array}[t]{|lll|}
\multicolumn{3}{c}{R} \\
\hline
a & b & c \\
a & b & d \\
c & b & d \\
d & c & a \\
\hline
\end{array}
\qquad
\begin{array}[t]{|ll|}
\multicolumn{2}{c}{T} \\
\hline
a & f(b) \\
c & f(b) \\
d & f(c) \\
\hline
\end{array}
$$
\caption{Instances used in Example~\ref{exar}.}
\label{tabar}
\end{table}

\begin{example} \label{exfamily}
Recall the oCQ $Q$ from the Introduction:
$$
\mathit{Family}(c,f(x,y)) \gets \mathit{Mother}(c,x),
\mathit{Father}(c,y). 
$$
If $I$ is the instance consisting of the Mother and Father facts
listed in Table~\ref{tabfamily}, then $Q(I)$ is the extended
instance consisting of the
extended Family facts listed in the same table.
\end{example}

\begin{example} \label{exar}
For a more abstract example, consider the following oCQ $Q$:
$$ T(x,f(y)) \gets R(x,y,z). 
$$
If $I$ is the instance consisting of the $R$-facts listed in
Table~\ref{tabar}, then $Q(I)$ consists of the extended $T$-facts
listed in the same table.
\end{example}

\subsection{The single-function case}

In this paper, we focus on \emph{single-function} oCQs (sifo
CQs), that have exactly one occurrence of a function symbol in the
head.  Without loss of generality we always place the function
term in the last position of the head.

\begin{definition}
A sifo CQ over a database schema $\schema S$ is an oCQ over
$\schema S$ of the form
$$ 
  T(\bar x,f(\bar z)) \gets B, 
$$
where
\begin{itemize}
\item
$T$ is the head predicate;
\item
$f$ is a function symbol;
\item
$B$ is the body;
\item
$\bar x$ is a tuple of (not necessarily distinct) variables from
$\var B$, called the \emph{distinguished variables};
\item
$\bar z$ is a tuple of (not necessarily distinct) variables from
$\var B$, called the \emph{creation variables};
some creation variables may be distinguished;
\item
The elements of $\var B$ that are not distinguished are called
the \emph{non-distinguished variables}.
\end{itemize}
\end{definition}


\begin{example} \label{exsifo}
The queries in Examples
\ref{exfamily} and \ref{exar} are both examples of sifo CQs.
\end{example}

\subsection{Comparison with ILOG}

Object-creating CQs can be considered to be the con\-junc\-tive-query
fragment of nonrecursive ILOG
\cite{hy_ilog}; our syntax exposes the Skolem functions, which are
normally obscured in the standard ILOG syntax, and our semantics
corresponds to what is called the `exposed semantics' by Hull and
Yoshikawa.  Nevertheless, in the following section, we will
consider oid-equivalence of sifo CQs, which does correspond to what
has been called `obscured equivalence' \cite{hy_pods91}.

\section{Characterization of oid-equivalence for sifo CQs}
\label{secoidequiv}


\subsection{Oid-equivalence of oCQs}

The result $Q(I)$ of an oCQ $Q$ applied to an instance $I$ is an
extended instance.  The data terms in $\adom{Q(I)}$ that are not
constants play the role of created oids (also called invented values).
Intuitively it is clear that the actual form of the created oids
does not matter.

\begin{table}
\centering
\begin{tabular}[t]{|ll|}
\multicolumn{2}{c}{Family} \\
\hline
beth & $g(\rm anne,adam,anne)$ \\
ben & $g(\rm anne,adam,anne)$ \\
eric & $g(\rm claire,carl,claire)$ \\
emma & $g(\rm diane,carl,diane)$ \\
\hline
\end{tabular}
\caption{Instance used in Example~\ref{exequiv}.}
\label{tabequiv}
\end{table}

\begin{example} \label{exequiv}
Recall the query $Q$ from Example~\ref{exfamily}:
$$
\mathit{Family}(c,f(x,y)) \gets \mathit{Mother}(c,x),
\mathit{Father}(c,y). $$
As mentioned in the Introduction, 
we could have used equivalently the following query $Q'$:
$$
\mathit{Family}(c,g(x,y,x)) \gets \mathit{Mother}(c,x),
\mathit{Father}(c,y). $$
Applying the above query to the Mother and Father facts from
Table~\ref{tabfamily} results in the instance shown in
Table~\ref{tabequiv}.  Intuitively, this instance has exactly the
same relevant properties as the Family-instance from
Table~\ref{tabfamily}: beth and ben are linked to the same
family-oid; eric is linked to another oid;
and emma to still another one.
\qed
\end{example}

We formalize this intuition in the following definitions.

\begin{definition}
Let $J$ be an extended instance.
\begin{itemize}
\item
The set $\adom J - \dom$ is denoted by $\oids J$;
\item
The set $\adom J \cap \dom$ is denoted by $\consts J$.
\end{itemize}
\end{definition}

\begin{definition}
Let $J$ be an extended instance and let $\rho$ be a mapping from
$\adom J$ to the set of data terms.  For any extended fact $e =
R(d_1,\dots,d_k)$ in
$J$, we define $\rho(e)$ to be the extended fact
$R(\rho(d_1),\dots,\rho(d_k))$.
We then define $\rho(J) := \{\rho(e) \mid e \in J\}$.
\end{definition}

\begin{definition}
Let $J_1$ and $J_2$ be extended instances.  Then $J_1$ and $J_2$
are called \emph{oid-isomorphic} if there exists a bijection
$\rho : \adom {J_1} \to \adom {J_2}$ such that
\begin{itemize}
\item
$\rho$ is the identity on
$\consts {J_1}$;
\item
$\rho$ maps $\oids {J_1}$ to $\oids {J_2}$;
\item
$\rho(J_1)=J_2$.
\end{itemize}
Such a bijection $\rho$ is called an \emph{oid-isomorphism} from
$J_1$ to $J_2$.
\end{definition}

The above definition implies that oid-isomorphic instances have
the same constants.  Formally, if $J_1$ and $J_2$ are oid-isomorphic
then $\consts {J_1} = \consts {J_2}$.

\begin{definition}
Let $Q$ and $Q'$ be two oCQs with the same head predicate, and
over the same database sche\-ma~$\schema S$.
Then $Q$ and $Q'$ are called \emph{oid-equivalent} if for
every instance $I$ over $\schema S$, the results $Q(I)$ and
$Q'(I)$ are oid-isomorphic.
\end{definition}

\begin{example} \label{exoideq}
The queries in Example \ref{exequiv} are oid-equivalent.
For example, for the instance $I$ of Table~\ref{tabfamily}, the
oid-isomorphism from $Q(I)$ to $Q'(I)$ is as follows:
$$ \begin{array}{rcl}
f({\rm anne,adam}) & \mapsto & g(\rm anne,adam,anne) \\
f({\rm claire,carl}) & \mapsto & g(\rm claire,carl,claire) \\
f({\rm diane,carl}) & \mapsto & g(\rm diane,carl,diane).
\end{array}
$$
\end{example}

\begin{table}
$$
\begin{array}[t]{|lll|}
\multicolumn{3}{c}{I} \\
\hline
a & b & c \\
d & b & e \\
\hline
\end{array}
\qquad
\begin{array}[t]{|ll|}
\multicolumn{2}{c}{Q(I)} \\
\hline
a & f(b) \\
d & f(b) \\
\hline
\end{array}
\qquad
\begin{array}[t]{|ll|}
\multicolumn{2}{c}{Q'(I)} \\
\hline
a & f(a,b) \\
d & f(d,b) \\
\hline
\end{array}
$$
\caption{Instances used in Example~\ref{exnotoidequiv}.}
\label{tabnotoidequiv}
\end{table}

\begin{example}
\label{exnotoidequiv}
Recall the query $Q$ from Example~\ref{exar}:
$$ T(x,f(y)) \gets R(x,y,z) $$
Also consider the following variation $Q'$ of $Q$:
$$ T(x,f(x,y)) \gets R(x,y,z) $$
Then $Q$ and $Q'$ are not oid-equivalent, as shown by the
simple instances in Table~\ref{tabnotoidequiv}.  Indeed, there cannot be
an oid-isomorphism from $Q(I)$ to $Q'(I)$ because $Q(I)$ contains
only one distinct oid while $Q'(I)$ contains two distinct oids.
\end{example}

\begin{table}
$$
\begin{array}[t]{|lll|}
\multicolumn{3}{c}{I} \\
\hline
a & b & c \\
a & d & e \\
\hline
\end{array}
\qquad
\begin{array}[t]{|ll|}
\multicolumn{2}{c}{Q(I)} \\
\hline
a & f(a) \\
\hline
\end{array}
\qquad
\begin{array}[t]{|ll|}
\multicolumn{2}{c}{Q'(I)} \\
\hline
a & f(a,b,c) \\
a & f(a,d,e) \\
\hline
\end{array}
$$
\caption{Instances used in Example~\ref{ex2notoid}.}
\label{tab2notoid}
\end{table}

\begin{example}
\label{ex2notoid}
As a variant of Example~\ref{exnotoidequiv}, consider the following two
oCQs:
\begin{align*}
Q & = T(x,f(x)) \gets R(x,y,z) \\
Q' & = T(x,f(x,y,z)) \gets R(x,y,z)
\end{align*}
Again these two oCQs are not oid-equivalent, as shown by the
counterexample instances in
Table~\ref{tab2notoid}.
\end{example}

\subsection{Homomorphisms and containment of conjunctive queries}

The characterizations we will give for oid-equivalence of sifo
CQs depend on the classical notions of homomorphism and
containment between conjunctive queries.
Let us briefly recall these
notions now \cite{cm,ahv_book}.

A \emph{variable mapping} is a mapping $h$ from a finite set $X$
of variables to another finite set $Y$ of variables.  If $A$ is
an atom with variables in~$X$, then we can apply $h$ to each
variable occurrence in~$A$ to obtain an atom with
variables in $Y$, which we denote by $h(A)$.  If $B$ is a set of
atoms with $\var B \subseteq X$,
then we naturally define $h(B):=\{h(A) \mid A \in B\}$.

For two sets $B$ and $B'$ of atoms, a variable mapping $h : \var
B \to \var {B'}$ is called a \emph{homomorphism from $B$ to $B'$}
if $h(B) \subseteq B'$.  This is denoted by $h : B \to B'$.  The
notion of homomorphism is extended to conjunctive queries $Q = H
\gets B$ and $Q' = H' \gets B'$ as follows.  A homomorphism from
$Q$ to $Q'$ is a homomorphism $h : B \to B'$ such that $h(H)=H'$.
This is denoted by $h : Q \to Q'$.

A classical result relates
homomorphisms between conjunctive queries to containment.  Let
$Q$ and $Q'$ be two conjunctive queries over a common database schema
$\schema S$. We say that \emph{$Q'$ is contained in $Q$} if for
every instance $I$ of $\schema S$, we have $Q'(I) \subseteq
Q(I)$.  The classical result states that $Q'$ is contained in $Q$
if and only if there exists a homomorphism $h : Q \to Q'$.

Two queries $Q$ and $Q'$ are \emph{equivalent} if for every
instance $I$ of $\schema S$, we have $Q(I)=Q'(I)$.  Since
equivalence amounts to containment in both directions,
two conjunctive queries are equivalent if and only if there
exist homomorphisms between them in both directions.

\subsection{A normal form for oid-equivalence problems}

In this subsection we
consider two arbitrary sifo CQs $Q$, $Q'$ with the same head predicate:
\begin{align*}
Q \ &= \ T(\tpl x,f(\tpl z)) \gets B
\\
Q'\ &= \ T(\tpl x',f'(\tpl z')) \gets B'.
\end{align*}
Then $\tpl x$ and $\tpl x'$ have equal length.
Note that $\tpl x$ and $\tpl z$ as well as
$\tpl x'$ and $\tpl z'$ may have variables in common.

Our aim is to show that oid-equivalence between
arbitrary sifo CQs $Q$ and $Q'$
can be reduced to the case where the heads
$$
T(\tpl x,f(\tpl z)) \quad\mbox{and}\quad T(\tpl x',f'(\tpl z'))
$$
have identical arguments, that is,
where $\tpl x = \tpl x'$ and $\tpl z = \tpl z'$.

As a first lemma we state that rearranging the creation variables
of a query does not affect oid-equivalence.

\begin{lemma}[Rearranging creation variables]
  \label{lem:rearrc}
Let $Q$ be a sifo CQ written as above.
Let $\tpl u$ be a tuple with exactly the same variables as $\tpl z$,
but possibly with different repetitions and a different  ordering,
and let $g$ be a function symbol whose arity is equal
to the length of $\tpl u$.
Then the sifo CQ $P = T(\tpl x,g(\tpl u)) \gets B$ is oid-equivalent to $Q$.
\end{lemma}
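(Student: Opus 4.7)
The plan is to construct, for every instance $I$ over the schema, an explicit oid-isomorphism $\rho : \adom{Q(I)} \to \adom{P(I)}$ driven by the matchings of the common body $B$ in $I$. Since $Q$ and $P$ have the same distinguished tuple $\tpl x$, their results will share exactly the same constants; the only thing that changes between $Q(I)$ and $P(I)$ is the shape of the created oids, and I want to witness that this change is purely cosmetic.

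First I would set up notation: fix $I$, and for each matching $\alpha : B \to I$ let $\Phi(\alpha) := f(\alpha(\tpl z))$ and $\Psi(\alpha) := g(\alpha(\tpl u))$. The key observation, which makes everything else go, is this: because $\tpl u$ and $\tpl z$ have the same variable set, for any two matchings $\alpha,\alpha' : B \to I$ we have $\Phi(\alpha) = \Phi(\alpha')$ iff $\alpha$ and $\alpha'$ agree on every variable of $\tpl z$, iff they agree on every variable of $\tpl u$, iff $\Psi(\alpha) = \Psi(\alpha')$. This equivalence is the workhorse: it lets us define $\rho$ on oids by $\rho(\Phi(\alpha)) := \Psi(\alpha)$ and get well-definedness and injectivity simultaneously. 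On constants of $\consts{Q(I)}$ I would set $\rho$ to be the identity.

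Next I would verify the three defining conditions of an oid-isomorphism. Being the identity on constants is by construction. For the bijection between $\oids{Q(I)}$ and $\oids{P(I)}$: well-definedness and injectivity both follow from the key observation above, while surjectivity is immediate since every element of $\oids{P(I)}$ is of the form $g(\alpha(\tpl u)) = \Psi(\alpha)$ for some matching $\alpha$ of $B$ in $I$, and such an $\alpha$ also produces $\Phi(\alpha) \in \oids{Q(I)}$. It remains to check that no oid accidentally coincides with a constant, but this is fine: every element of $\oids{Q(I)}$ is a data term headed by $f$, hence nonconstant, and likewise for $P(I)$ with $g$. Finally, for $\rho(Q(I)) = P(I)$: a typical fact in $Q(I)$ has the form $T(\alpha(\tpl x), \Phi(\alpha))$ produced by a matching $\alpha$, and applying $\rho$ sends it to $T(\alpha(\tpl x), \Psi(\alpha))$, which is exactly the fact produced by the same matching $\alpha$ against the head of $P$; this gives both inclusions.

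The main obstacle is essentially the bookkeeping around well-definedness: because distinct matchings can produce the same oid when they agree on the variables of $\tpl z$ but differ elsewhere, one has to argue that the induced map on oids does not depend on the choice of representative matching. The symmetric use of the key observation (same-variables condition in both directions between $\tpl z$ and $\tpl u$) is what cleanly dispatches this, and no further subtlety arises since $\tpl x$ is untouched by the rearrangement.
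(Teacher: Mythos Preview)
Your proposal is correct and follows essentially the same approach as the paper: define $\rho$ on oids by sending $f(\alpha(\tpl z))$ to $g(\alpha(\tpl u))$ for any matching $\alpha:B\to I$, and use the fact that $\tpl z$ and $\tpl u$ have the same underlying variable set to obtain both well-definedness and injectivity. Your write-up is somewhat more explicit (e.g., spelling out both inclusions for $\rho(Q(I))=P(I)$ and noting that terms headed by $f$ or $g$ cannot be constants), but the argument is the same.
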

\begin{proof}
Let $I$ be an instance.  We define an oid isomorphism from $Q(I)$
to $P(I)$ as follows.
Any oid $o$ in $Q(I)$ is of the form $f(\alpha(\tpl z))$ for some
matching $\alpha\colon B\to I$;
we define $\rho(o) := g(\alpha(\tpl u))$.  This is well-defined,
i.e., independent of the choice of $\alpha$.  Indeed, if
the data terms $f(\alpha_1(\tpl z))$ and
$f(\alpha_2(\tpl z))$ are equal, then the tuples
$\alpha_1(\tpl z)$ and
$\alpha_2(\tpl z)$ are equal,
which implies that $\alpha_1$ and $\alpha_2$
agree on every variable appearing in $\bar z$.  Since exactly the
same variables appear in $\bar u$, also the tuples
$\alpha_1(\tpl u)$ and
$\alpha_2(\tpl u)$ are equal,
whence
$g(\alpha_1(\tpl u))=
g(\alpha_2(\tpl u))$.

That $\rho : \oids{Q(I)} \to \oids{P(I)}$
is injective is shown by an analogous argument.
The surjectivity of $\rho$, as well as the equality
$\rho(Q(I)) = P(I)$, are clear.
\end{proof}

By the above lemma, we can remove all duplicates from $\bar z$
and $\bar z'$ in the heads of $Q$ and $Q'$, respectively.
So, from now on we may assume $\bar z$ and $\bar z'$ have no
duplicates.

In the following,
let $Z$ equal the set of variables occurring in $\bar z$,
let $X$ equal the set of variables occurring in $\bar x$, and let $Z'$
and $X'$ be defined similarly.

We next show that two sifo CQs can only be oid-equivalent if they
have identical patterns of distinguished variables, up to renaming.

\begin{lemma}[Renaming distinguished variables]
  \label{lem:rearrd}
If $Q$ and $Q'$ are oid-equivalent, then there exists a bijective
variable mapping $\sigma : X \to X'$ such that $\sigma(\bar x)=\bar x'$.%
\end{lemma}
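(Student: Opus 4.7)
The plan is to construct canonical ``frozen'' instances and exploit the fact that the oid-isomorphism witnessing oid-equivalence must be the identity on constants.

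First I would freeze the body $B$: view each variable $y \in \var B$ as a fresh constant $\hat y$, and let $I_B = \hat{}(B)$ be the instance obtained by regarding $B$ as a set of facts. The identity valuation is then a matching $\var B \to I_B$, so $Q(I_B)$ contains a fact $T(\hat{\tpl x}, f(\hat{\tpl z}))$ whose first $|\tpl x|$ arguments are constants. By oid-equivalence, $Q'(I_B)$ is oid-isomorphic to $Q(I_B)$; since every oid-isomorphism fixes constants, the fact $T(\hat{\tpl x}, o)$ must occur in $Q'(I_B)$ for some oid $o$. Hence there is a matching $\beta\colon B' \to I_B$ with $\beta(\tpl x') = \hat{\tpl x}$. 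Composing $\beta$ with the inverse of the freezing map yields a variable mapping $\sigma'\colon \var{B'} \to \var B$ such that $\sigma'(\tpl x') = \tpl x$; restricting to $X'$ gives a map $\sigma'\colon X' \to X$ with $\sigma'(\tpl x') = \tpl x$.

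Then I would repeat the argument with the roles of $Q$ and $Q'$ swapped, freezing $B'$ into an instance $I_{B'}$ and obtaining a variable mapping $\sigma\colon X \to X'$ with $\sigma(\tpl x) = \tpl x'$. It remains to show that $\sigma$ is a bijection. From $\sigma'(\sigma(\tpl x)) = \sigma'(\tpl x') = \tpl x$ we conclude that $\sigma' \circ \sigma$ fixes every component of $\tpl x$; because every variable in $X$ occurs somewhere in $\tpl x$ by definition of $X$, this means $\sigma' \circ \sigma = \id_X$. The symmetric computation gives $\sigma \circ \sigma' = \id_{X'}$, so $\sigma$ is bijective with inverse $\sigma'$, and $\sigma(\tpl x) = \tpl x'$ as required.

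The main obstacle I expect is conceptual rather than technical: one must notice that oid-isomorphisms are forced to preserve the pattern of \emph{constants} in each fact, so the positions of $\tpl x$ in the head are completely rigid under oid-equivalence, while only the argument of $f$ is subject to the oid-bijection. Once this is isolated, the ``canonical instance'' construction (a standard tool in the classical chase/homomorphism argument for CQ containment) transports the equality $\beta(\tpl x') = \tpl x$ from the frozen instance back to a variable mapping, and the two-sided freezing then yields the bijection essentially for free.
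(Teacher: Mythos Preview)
Your proof is correct. The paper takes a slightly more abstracted route: it first observes that oid-equivalence of $Q$ and $Q'$ forces the classical CQs $Q_0 = T_0(\bar x)\gets B$ and $Q_0' = T_0(\bar x')\gets B'$ to be equivalent, then invokes the Chandra--Merlin homomorphism theorem to obtain homomorphisms $h\colon Q_0\to Q_0'$ and $h'\colon Q_0'\to Q_0$, and finally sets $\sigma = h|_X$ and shows injectivity from $h'\circ\sigma = \id_X$. Your argument unrolls this: rather than naming the intermediate step ``classical CQ equivalence'' and citing Chandra--Merlin, you run the canonical-instance construction by hand on each side and read off the variable mappings directly from the oid-isomorphism's action on constants. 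The two arguments are the same in substance; the paper's version is shorter because it reuses a known black box, while yours is more self-contained and makes explicit where the rigidity of the $\bar x$-positions comes from (namely, that oid-isomorphisms fix constants).
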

\begin{proof}
Certainly, if $Q$ and $Q'$ are oid-equivalent, then the conjunctive
queries
$Q_0 = T_0(\bar x) \gets B$ and
$Q_0' = T_0(\bar x')\gets B'$,
where $T_0$ is a new predicate symbol, are equivalent.
So, there are homomorphisms $h\col Q_0 \to Q_0'$ and $h'\col Q_0' \to Q_0$.
In particular, $h(\bar x)=\bar x'$ and $h'(\bar x')=\bar x$.
We define $\sigma$ to be the restriction of $h$ to $X$.
The claim $\sigma(\bar x) = \bar x'$ and the
surjectivity of $\sigma$ are then clear.
So it remains to show that $\sigma$ is injective.
Thereto, consider $h'(\sigma(\bar x)) = h'(h(\bar x)) = h'(\bar x') = \bar x$.
We see that $h'\circ \sigma$ is the identity on $X$ and thus injective.
Hence, $\sigma$ must be injective as well.
\end{proof}

By the above lemma, if there does \emph{not} exist a renaming
$\sigma$ as in the lemma, certainly $Q$ and $Q'$ are not
oid-equivalent.  If there exists such a renaming, then by
renaming the variables in one of the two queries, we can now
assume without loss of generality that $\bar x= \bar x'$ and in
particular that $X=X'$.

The next step is to show that oid-equivalent queries must
have the same distinguished
variables among the creation variables, that is, $X\cap Z=X \cap Z'$.

\begin{lemma}[Distinguished creation variables]
  \label{lem:distinguished-creation-vars}
If $X\cap Z \neq X \cap Z'$, then $Q$ and $Q'$ are not oid-equivalent.
\end{lemma}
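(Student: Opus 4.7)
The plan is to exhibit a single instance on which $Q$ and $Q'$ fail to be oid-isomorphic. By the symmetry of oid-equivalence I may assume without loss of generality that there exists a variable $w \in (X \cap Z) \setminus (X \cap Z')$, and I fix a position $i$ of the head with $x_i = w$. The guiding observation is that, since Lemma~\ref{lem:rearrc} has let us normalise $\bar z$ to be duplicate-free, the value $\alpha(w)$ is uniquely determined by the oid $f(\alpha(\bar z))$; hence for \emph{every} instance $I$, any two facts of $Q(I)$ that share an oid must agree at head-position $i$. The goal is then to force $Q'(I)$ to violate this property.

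To engineer such an $I$, I would use a two-variant freezing of $B'$: let $\beta_1$ send each variable $v \in \var{B'}$ to a distinct fresh constant $c_v$, let $\beta_2$ agree with $\beta_1$ everywhere except that $\beta_2(w) = c'$ for a fresh constant $c' \neq c_w$, and set $I := \beta_1(B') \cup \beta_2(B')$. Both $\beta_1$ and $\beta_2$ are matchings $B' \to I$ by construction. Since $w \notin Z'$, we have $\beta_1(\bar z') = \beta_2(\bar z')$, so the two matchings contribute to $Q'(I)$ two facts sharing the same oid $o := f'(\beta_1(\bar z'))$ but carrying different values (namely $c_w$ and $c'$) at head-position $i$.

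If now $\rho\colon Q(I) \to Q'(I)$ were an oid-isomorphism, then $\rho^{-1}$ applied to those two facts of $Q'(I)$ would yield two facts of $Q(I)$ sharing the oid $\rho^{-1}(o)$ and still disagreeing at head-position $i$, contradicting the intrinsic property established above. The only subtle point in carrying out this plan is making sure that the property of $Q(I)$ really is intrinsic, i.e.\ holds on \emph{every} instance independently of any relation between $B$ and $B'$; this is exactly where the duplicate-free normalisation of $\bar z$ from Lemma~\ref{lem:rearrc} is needed, so that no two matchings of $B$ can produce the same $\bar z$-tuple while disagreeing on $w$.
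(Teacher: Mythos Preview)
Your argument is correct and follows essentially the same approach as the paper's proof: both take $w \in (X\cap Z)\setminus Z'$, build an instance from two copies of $B'$ that differ only in the value assigned to $w$, observe that $Q'$ produces two facts with the same oid but different entries at the head position of $w$, and note that $Q$ can never do this because $w\in Z$ forces the oid to determine the $w$-value. The only cosmetic differences are that the paper treats variables directly as constants (writing $I = B' \cup d(B')$) rather than introducing your fresh $c_v$'s, and that it argues the impossibility for $Q$ on the specific instance rather than stating it as a universal ``intrinsic'' property.
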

\begin{proof}
Either there exists some $x \in X\cap Z$ but not in $Z'$ or vice versa.
By symmetry we may assume the first possibility.

We construct an instance $I$ from $B'$.  In doing this, to keep
our notation simple, we consider the variables in $B'$ to be
constants.  The instance $I$ is obtained from $B'$ by
duplicating $x$ to some new element $x_2$.  Formally, consider
the mapping $d$ on $\var {B'}$ that is the identity everywhere except
that $x$ is mapped to $x_2$; then $I = B' \cup d(B')$.

First, let us look at $Q'(I)$.  Using the identity matching that
maps every variable to itself, we obtain the extended fact
$T(\tpl x,f'(\tpl z')) \in Q'(I)$.  Using the matching $d$
defined above, we obtain the extended fact $T(\tpl
x_2,\allowbreak f'(d(\tpl
z')))$ in $Q'(I)$.  Here, $\tpl x_2$ denotes $d(\bar x)$, i.e.,
$\bar x_2$ is obtained from $\tpl x$ by replacing $x$ with $x_2$.
Since $x$ does not belong to $Z'$, we have $d(\tpl z') =
\tpl z'$, so $T(\tpl x_2,f'(\tpl z')) \in Q'(I)$.

On the other hand, in $Q(I)$ consider any two extended facts
$T(\alpha_1(\tpl x),\allowbreak f(\alpha_1(\tpl z)))$ and
$T(\alpha_2(\tpl x),\allowbreak f(\alpha_2(\tpl z)))$,
with matchings $\alpha_1\col B \to I$ and $\alpha_2\col B \to I$,
such that $\alpha_1(\tpl x)=\tpl x$ and
$\alpha_2(\tpl x) = \tpl x_2$.
Then in particular $\alpha_1(x)=x$ and $\alpha_2(x)=x_2$.
Since $\alpha_1$ and $\alpha_2$ differ on $x$, and $x$ is in $Z$,
also $\alpha_1(\tpl z)$ and $\alpha_2(\tpl z)$ are different.
Hence, the two last components $f(\alpha_1(\tpl z))$ and $f(\alpha_2(\tpl z))$
are different.  Thus, we see that in $Q(I)$ it is impossible
to have two extended atoms $T(\tpl x,o)$ and $T(\tpl x_2,o)$ with the
same oid $o$.
But we have seen this is possible in $Q'(I)$,
so $Q(I)$ and $Q'(I)$ are not oid-isomorphic
and $Q$ and $Q'$ cannot be oid-equivalent.
\end{proof}

By the above Lemma we now assume $X\cap Z=X\cap Z'$.
The last step is to show that $Z\setminus X$ and $Z'\setminus X$,
the sets of non-distinguished creation variables,
need to have the same cardinality.

\begin{lemma}[Non-distinguished creation variables]
  \label{lem:non-distinguished-creation-vars}
If $Z\setminus X$ and $Z'\setminus X$ have different cardinality
then $Q$ and $Q'$ are not oid-equivalent.
\end{lemma}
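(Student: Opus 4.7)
My plan is to prove the contrapositive: assuming without loss of generality that $k := |Z \setminus X| > \ell := |Z' \setminus X|$, I will construct an instance $I$ on which $Q(I)$ attaches strictly more distinct oids to a single distinguished tuple than $Q'(I)$ possibly can, so no oid-isomorphism (which must be the identity on constants, and hence must restrict to a bijection between the oids sitting above each fixed distinguished tuple) can exist between $Q(I)$ and $Q'(I)$.

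The construction will inflate the canonical instance of $B$ along its non-distinguished creation variables. Enumerate $Z \setminus X = \{u_1, \ldots, u_k\}$ and fix a large integer $n$ to be chosen later. For each $c = (c_1, \ldots, c_k) \in \{1, \ldots, n\}^k$, let $\alpha_c$ be the valuation on $\var B$ that is the identity on $\var B \setminus \{u_1, \ldots, u_k\}$ and sends each $u_j$ to a fresh symbol $u_j^{(c_j)}$. Define $I := \bigcup_{c} \alpha_c(B)$. By construction each $\alpha_c$ is a matching $B \to I$ with $\alpha_c(\bar x) = \bar x$ (the same tuple of constants for every $c$, since $\bar x$ avoids $\{u_1, \ldots, u_k\}$). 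Because we may assume by Lemma~\ref{lem:rearrc} that $\bar z$ has no duplicates, and $\bar z$ contains every $u_j$, distinct $c \neq c'$ will yield distinct tuples $\alpha_c(\bar z) \neq \alpha_{c'}(\bar z)$ and hence distinct oids $f(\alpha_c(\bar z))$. Thus $Q(I)$ will contain at least $n^k$ distinct oids attached to the fixed tuple $\bar x$.

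On the $Q'$ side, any fact $T(\bar x, o')$ in $Q'(I)$ must be witnessed by a matching $\beta\colon B' \to I$ with $\beta(\bar x) = \bar x$ and $o' = f'(\beta(\bar z'))$. Using Lemma~\ref{lem:distinguished-creation-vars}, we have $X \cap Z' = X \cap Z$, so the coordinates of $\beta(\bar z')$ indexed by $X \cap Z'$ are fully determined by $\beta(\bar x) = \bar x$; only the $\ell$ coordinates indexed by $Z' \setminus X$ remain free, each ranging over $\adom I$. Hence at most $|\adom I|^\ell$ distinct oids can be attached to $\bar x$ in $Q'(I)$. Since $|\adom I| \le |\var B| + nk = O(n)$ while $n^k$ grows strictly faster than any power $n^\ell$ with $\ell < k$, choosing $n$ large enough forces $n^k > |\adom I|^\ell$, which breaks the bijection required of any oid-isomorphism and completes the argument.

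The main obstacle in this plan is arranging the construction so that $n^k$ matchings of $B$ coexist in a single instance while simultaneously producing pairwise distinct oid values; the no-duplicates normalization of $\bar z$ afforded by Lemma~\ref{lem:rearrc} is what makes this step clean. Once it is in place, the counting inequality against $|\adom I|^\ell$ is routine, and the previously established constraints $\bar x = \bar x'$ and $X \cap Z' = X \cap Z$ are exactly what confines the $Q'$ side to the smaller polynomial bound.
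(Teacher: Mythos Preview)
Your proof is correct and follows essentially the same approach as the paper's: both inflate the canonical instance $B$ by replacing each non-distinguished creation variable with $n$ fresh copies, obtain $n^k$ distinct oids attached to the fixed tuple $\bar x$ in $Q(I)$, bound the oids attached to $\bar x$ in $Q'(I)$ by $|\adom I|^{\ell} = O(n^{\ell})$, and let $n$ grow to force a contradiction. Your write-up is slightly more explicit about why the no-duplicates normalization of $\bar z$ (Lemma~\ref{lem:rearrc}) is needed and why an oid-isomorphism must restrict to a bijection on the oids above each fixed distinguished tuple, but the argument is the same.
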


\begin{proof}
As in the proof of Lemma~\ref{lem:distinguished-creation-vars},
we consider $B$ as an instance,
viewing variables as constants.

Let $k$ and $k'$ be the cardinalities of $Z\setminus X$ and
$Z\setminus X'$, respectively. By symmetry we may assume that $k
> k'$.  Now, for any natural number $n$, let $I_n$ be the
instance obtained from $B$ by independently multiplying each variable $z \in
Z\setminus X$ into $n$ fresh copies $z^{(1)}, \dots, z^{(n)}$.
Formally, for any function $d:Z \setminus X \to \{1,\dots,n\}$,
let $\hat d$ be the valuation on $\var B$ that maps each $z \in
Z-X$ to $z^{(d(z))}$ and that is the identity on all other
variables.  Then $$ I_n = \bigcup_{d:Z\setminus X\to\{1,\dots,n\}} \hat
d(B). $$

There are $n^k$ different functions $d:
Z-X\to\{1,\dots,\allowbreak n\}$.
Each corresponding valuation $\hat d$ is a matching of $B$ in
$I_n$; all these matchings are the identity on $\bar x$ but are
pairwise different on $\bar z$.  Thus there are at least $n^k$
different extended facts in $Q(I_n)$ of the form $T(\bar x,o)$.

On the other hand, consider any set $S$ of valuations from $X \cup Z'$
to $\adom{I_n}$ that are pairwise different on $Z'\setminus X$ but that all
agree on $X$.  The cardinality of $Z'\setminus X$ is
$k'$. The cardinality of $\adom{I_n}$ is
$O(n)$ (although the cardinality of $I_n$ itself is larger).
Hence, such a set $S$ can be of cardinality at most $O(n^{k'})$.
Consequently, since $k>k'$, for $n$ large enough, $Q'(I_n)$ cannot
possibly contain $n^k$ different extended facts of the form
$T(\bar x, o)$.  But we saw that this is possible in $Q(I_n)$.
So, $Q(I_n)$ and $Q'(I_n)$ are not oid-isomorphic and $Q$ and
$Q'$ cannot be oid-equivalent.
\end{proof}

By the above lemma, and after renaming the variables in $Z'\setminus X$ and
reordering the variables in $\bar z'$, we may
now indeed assume that $\bar z$ and $\bar z'$ are identical.

\subsection{Characterization of oid-equivalence}
\label{sec:charoid}

According to the results of the preceding subsection,
we are now given two sifo CQs as follows:
\begin{align}
Q &\ = \ T(\tpl x,f(\tpl z)) \gets B
  \label{eqn:normal-form-one}\\
Q'&\ = \ T(\tpl x,f'(\tpl z)) \gets B'.
  \label{eqn:normal-form-two}
\end{align}
Note that $Q$ and $Q'$ have identical tuples $\bar x$ and $\bar
z$ of distinguished and
creation variables; moreover, $\bar z$ contains no variable more
than once.
As before, we denote the sets of distinguished and creation variables
as $X$ and $Z$, respectively.

We will show that $Q$ and $Q'$ are oid-equivalent if and only if
there are homomorphisms between $B$ and $B'$ in both directions
that (i) keep $\tpl x$ fixed and (ii) possibly permute the
variables in $\tpl z$.  To make this formal, we associate to each
query a classical CQ without function symbols.

\begin{definition}
Fix a new relation symbol $\fltg T$ of arity the sum of the
lengths of $\bar x$ and $\bar z$.  The \emph{flattening} of $Q$
is the query $\fltg Q = \fltg T(\tpl x,\tpl z) \gets B$.
The query $\fltg Q'$ is defined similarly.
\end{definition}

Let $\pi$ be a permutation of the set $Z\setminus X$.
We extend $\pi$ to $\var B$ by defining it to be the identity outside $Z-X$.
We now define $\fltg Q^\pi$ to be the conjunctive query
obtained from $\fltg Q$ by permuting the variables in $\tpl z$,
that is
$$
 \fltg Q^\pi \ = \ \fltg T (\bar x,\pi(\bar z)) \gets B.
$$

This notion allows us to formulate the following natural sufficient condition
for oid-equivalence.

\begin{proposition}
  \label{theo:containment-modulo-permutation-implies-oid-equivalence}
If there exists a permutation $\pi$ of $Z-X$ such that $\fltg
Q^\pi$ and $\fltg Q'$ are equivalent, then $Q$ and $Q'$ are
oid-equivalent.
\end{proposition}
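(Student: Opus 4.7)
The plan is to construct, for an arbitrary instance $I$, an explicit oid-iso\-mor\-phism $\rho\col \adom{Q(I)} \to \adom{Q'(I)}$ directly from the assumed equivalence $\fltg Q^\pi \equiv \fltg Q'$. The central observation is that because $\bar z$ contains no duplicate variables (by the normalization established earlier in this section), the map $\bar t \mapsto f(\bar t)$ gives a bijection between the set of tuples $\{\alpha(\bar z) : \alpha\col B\to I\}$ and $\oids{Q(I)}$; likewise for $Q'$. So specifying $\rho$ on oids amounts to specifying what happens to each such tuple $\alpha(\bar z)$.

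Concretely, I would define $\rho$ to be the identity on $\consts{Q(I)}$ and set $\rho(f(\alpha(\bar z))) := f'(\alpha(\pi(\bar z)))$ for every matching $\alpha\col B\to I$. Well-definedness is the first thing to check: if $\alpha_1(\bar z)=\alpha_2(\bar z)$, then $\alpha_1$ and $\alpha_2$ agree on every variable of $\bar z$, which includes every variable of $\pi(\bar z)$ because $\pi$ only permutes elements of $Z\setminus X\subseteq Z$. Hence $\alpha_1(\pi(\bar z))=\alpha_2(\pi(\bar z))$, as required.

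Next I would show that $\rho$ maps $\oids{Q(I)}$ bijectively onto $\oids{Q'(I)}$ by bouncing back and forth through the equivalence. For any matching $\alpha\col B\to I$, the tuple $(\alpha(\bar x),\alpha(\pi(\bar z)))$ lies in $\fltg Q^\pi(I) = \fltg Q'(I)$, so there is $\alpha'\col B'\to I$ with $\alpha'(\bar x)=\alpha(\bar x)$ and $\alpha'(\bar z)=\alpha(\pi(\bar z))$; thus $\rho(f(\alpha(\bar z))) = f'(\alpha'(\bar z))\in\oids{Q'(I)}$. Surjectivity uses the equivalence in the opposite direction, and injectivity follows because $\pi$ being a permutation of the variables occurring in $\bar z$ implies that the map $\alpha(\bar z)\mapsto \alpha(\pi(\bar z))$ is itself injective.

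Finally I would verify $\rho(Q(I))=Q'(I)$: each fact $T(\alpha(\bar x),f(\alpha(\bar z)))\in Q(I)$ is sent by $\rho$ (which fixes the constants in $\alpha(\bar x)$) to $T(\alpha(\bar x),f'(\alpha(\pi(\bar z))))$, which by the argument above equals $T(\alpha'(\bar x),f'(\alpha'(\bar z)))\in Q'(I)$; the reverse inclusion is symmetric. Nothing here is really a hard obstacle: the only subtlety is book-keeping around the fact that $\bar z$ has no duplicates and that $\pi$ fixes $X$, and these two facts are precisely what makes the assignment $f(\alpha(\bar z))\mapsto f'(\alpha(\pi(\bar z)))$ a well-defined bijection rather than a multi-valued relation.
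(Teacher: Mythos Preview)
Your proposal is correct and follows essentially the same approach as the paper: define $\rho$ on oids by $f(\alpha(\bar z))\mapsto f'(\alpha(\pi(\bar z)))$, check well-definedness and injectivity via the fact that $\bar z$ is duplicate-free and $\pi$ permutes its variables, and obtain surjectivity and $\rho(Q(I))=Q'(I)$ from the equivalence $\fltg Q^\pi(I)=\fltg Q'(I)$. Your write-up is slightly more explicit than the paper's (you spell out the passage through a matching $\alpha'\col B'\to I$, whereas the paper simply declares surjectivity and the image equality to follow ``readily'' from the flattening equivalence), but there is no substantive difference.
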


\begin{proof}
Let $I$ be an instance.  We define an oid isomorphism $\rho$
from $Q(I)$ to $Q'(I)$ as follows.
Any oid $o$ in $Q(I)$ is of the
form $f(\alpha(\bar z))$ for some matching $\alpha\col B \to I$;
we define $\rho(o) := f'(\alpha(\pi(\bar z)))$.
This is well-defined, i.e., independent of the choice of $\alpha$.
Indeed, if the data terms $f(\alpha_1(\bar z))$ and $f(\alpha_2(\bar z))$ are equal,
then the tuples $\alpha_1(\bar z)$ and $\alpha_2(\bar z)$ are equal,
and consequently the permuted tuples
$\alpha_1(\pi(\bar z))$ and $\alpha_2(\pi(\bar z))$ are equal.
Hence, $f'(\alpha_1(\pi(\bar z)))=f'(\alpha_2(\pi(\bar z)))$.

The injectivity of
$\rho : \oids{Q(I)} \to \oids{Q'(I)}$ is shown by an analogous
argument.
The surjectivity of $\rho$,
and the equality $\rho(Q(I)) = Q'(I)$,
follow readily from the equality $\fltg Q^\pi(I)=\fltg Q'(I)$.
\end{proof}

We next prove that the sufficient condition given by the above
Proposition is actually also necessary for oid-equivalence.
The key idea for proving this is to show that
oid-equivalence of sifo CQs depends only on the \emph{number} of oids
generated for any binding of the distinguished variables.

\renewcommand{\sharp}{\#}

Formally, for any instance $I$ and any tuple $\bar c$ of elements
from $\adom I$, we define
$$
  \sharp_{\tpl c}(Q,I) := \# \set{o \mid T(\tpl c,o) \in Q(I)},
$$
that is, $\sharp_{\tpl c}(Q,I)$ denotes the number of distinct
oids $o$ that occur together with $\tpl c$ in $Q(I)$.
We will show that $Q$ and $Q'$ are oid-equivalent
if and only if
$\sharp_{\tpl c}(Q,I) = \sharp_{\tpl c}(Q',I)$ for all instances $I$
and tuples $\tpl c$.  The only-if direction of this statement is
obvious, but the if-direction is not so obvious.

For our proof, we rely on work by Cohen \cite{cohen_multiplicities}
who studied queries with multiset variables that are evaluated under
so-called combined semantics,
a semantics that combines set and multiset semantics.
Cohen characterized equivalence of such queries in terms of homomorphisms.

Queries with multiset variables (MV queries) have the form
$Q_0,M$ where $Q_0$ is a standard CQ and $M$ is some set of
variables of $Q_0$ that do not appear in the head of $Q_0$.
The elements of $M$ are
called the multiset variables.  Evaluating an MV query $Q_0,M$ on
an instance $I$ results in a multiset (bag) of facts, where
the number of times a fact occurs is related to the number of
different possible assignments of values to the multiset variables.

Let us define the combined semantics formally.  Let $Q_0$ be of
the form $H_0 \gets B_0$
and let $I$ be an input instance.  Recall that
$Q_0(I)$ according to the classical semantics equals
$$ \{ \alpha(H_0) \mid \alpha : B_0 \to I \}. $$
Let $W$ be the set of variables appearing in $H_0$.
Then the result of evaluating the MV query $Q_0,M$ on instance $I$ is
defined to be the multiset with ground set $Q_0(I)$, where for
each fact $e \in Q_0(I)$, the multiplicity of $e$ in the multiset
is defined to be
$$ \# \{\gamma|_M \mid \text{$\gamma : B_0 \to I$ and
$\gamma(H_0) = e$} \}. $$
That is, given a fact $\alpha(H_0) \in Q_0(I)$, there may be many
different matchings $\gamma$ that agree with $\alpha$ on $H_0$.
The multiplicity of $\alpha(H_0)$ is defined to be not the
\emph{total} number of different such matchings $\gamma$, but
rather the number of different \emph{restrictions} one obtains
when restricting these matchings $\gamma$ to
$M$.\footnote{The motivation for MV queries was to model the semantics of
positive SQL queries with nested \texttt{EXISTS} subqueries.
While queries under standard SQL semantics return multisets of tuples,
only the relations mentioned in the top level SQL block contribute to
the multiplicities of answers, whereas relations mentioned in the subquery
do not.}

Two MV queries are \emph{equivalent} if they evaluate to the same
multiset on every input instance.  Equivalence of MV queries can be
characterized using the notion of multiset-homomorphism
\cite{cohen_multiplicities}.  A \emph{multiset-ho\-mo\-mor\-phism} from
MV query $Q_0,M$ to MV query $Q'_0,M'$ is a homomorphism $h : Q_0
\to Q_0'$ such that $h$ is injective on $M$ and $h(M) \subseteq
M'$.  Cohen showed the following:
\begin{theorem}[\cite{cohen_multiplicities}, Thm~5.3]
\label{cohentheorem}
Two MV queries are equivalent if and only
if there are multiset homomorphisms between them in both
directions.
\end{theorem}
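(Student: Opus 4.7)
Given multiset homomorphisms $h$ and $h'$ in both directions, I would show that for any instance $I$ and any fact $e$, the multiplicities of $e$ in the two evaluations agree. Let $\Gamma_e$ be the set of matchings $\gamma\colon B_0\to I$ with $\gamma(H_0)=e$, and let $\Gamma_e'$ be its analogue on the other side. Composition with $h$ maps $\Gamma_e'$ into $\Gamma_e$, because $h(H_0)=H_0'$ and $h(B_0)\subseteq B_0'$; the condition $h(M)\subseteq M'$ makes this operation descend to a well-defined map $\psi_h$ between the distinct multiset-variable restrictions on either side, and $\psi_{h'}$ is defined symmetrically. To conclude that $\psi_h$ and $\psi_{h'}$ are inverse bijections I would exploit that $(h'\circ h)|_M$ and $(h\circ h')|_{M'}$ are self-injections of finite sets and hence permutations; replacing $h$ by $\tilde h:=h\circ(h'\circ h)^{k-1}$ for $k$ a common multiple of their orders yields a new multiset homomorphism for which $h'\circ\tilde h$ and $\tilde h\circ h'$ restrict to $\id_M$ and $\id_{M'}$, and then $\psi_{\tilde h}$, $\psi_{h'}$ are mutually inverse by direct computation.

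\textbf{Necessity $(\Rightarrow)$.} Here the situation is harder, because an arbitrary classical homomorphism between the underlying CQs need not respect multiset variables. The support of $(Q_0,M)(I)$ is $Q_0(I)$, so the equivalence entails classical equivalence of the underlying CQs, whence homomorphisms in both directions exist by the classical containment characterization; the issue is to find ones that are multiset homomorphisms. For each $n$, consider the canonical instance $I_n$ obtained from $B_0$ (variables regarded as constants) by replacing every $m\in M$ with $n$ fresh copies. The multiplicity of the canonical head fact $H_0$ in $(Q_0,M)(I_n)$ is a polynomial in $n$ of degree exactly $|M|$. Any matching of $B_0'$ in $I_n$ that produces $H_0$ arises from a classical homomorphism $\beta\colon Q_0'\to Q_0$ together with a choice of copy for each $v\in M'$ with $\beta(v)\in M$, and so contributes $n^{|\beta^{-1}(M)\cap M'|}$ to the multiplicity on the $Q_0'$-side. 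Equating polynomials across all $n$ forces $|M|=|M'|$ and the existence of some $\beta$ with $\beta(M')\subseteq M$ and $\beta$ injective on $M'$ --- in other words a multiset homomorphism $Q_0'\to Q_0$. The symmetric construction on $B_0'$ yields the other direction.

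\textbf{Main obstacle.} The delicate step is the necessity direction: bare classical equivalence provides only a classical homomorphism, and the extra multiset structure (equality of $|M|$ and $|M'|$, injectivity on multiset variables, and the inclusion $h(M)\subseteq M'$) must be distilled from the multiplicity data. The polynomial-growth argument above is designed precisely to pin down that combinatorial data from the blown-up canonical instance, in the same spirit as the counting argument used in Lemma~\ref{lem:non-distinguished-creation-vars}.
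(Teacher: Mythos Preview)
First, a framing remark: the paper does not prove this theorem. It is quoted verbatim from Cohen's paper and used as a black box, so there is no ``paper's own proof'' to compare your attempt against. What follows is an assessment of your sketch on its own merits.

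Your sufficiency direction is sound. The trick of raising $h'\circ h$ and $h\circ h'$ to a common power so that they become the identity on $M$ and $M'$, and then checking that the induced maps on restrictions are mutually inverse, is clean and correct.

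Your necessity direction, however, has a genuine gap: the polynomial argument on the single family $I_n$ does \emph{not} force injectivity of $\beta$ on $M'$. Concretely, take
\[
Q_0 = T_0(x)\gets R(x,m_1),\,S(x,m_2),\qquad M=\{m_1,m_2\},
\]
\[
Q_0' = T_0(x)\gets R(x,m_1'),\,R(x,m_2'),\qquad M'=\{m_1',m_2'\}.
\]
Then $I_n=\{R(x,m_1^{(i)}),S(x,m_2^{(j)})\mid i,j\in[n]\}$, and the multiplicity of $T_0(x)$ is exactly $n^2$ on \emph{both} sides: on the $Q_0'$ side, $\gamma'(m_1')$ and $\gamma'(m_2')$ are independent choices among the copies of $m_1$, giving $n^2$ distinct restrictions to $M'$. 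So the two polynomials coincide for all $n$, yet the only classical homomorphism $\beta\colon Q_0'\to Q_0$ sends both $m_1'$ and $m_2'$ to $m_1$ and is not injective on $M'$. (The queries are of course not MV-equivalent --- the instance $\{R(a,b_1),R(a,b_2),S(a,c)\}$ distinguishes them --- but your argument uses only the behaviour on $I_n$, and that behaviour already matches.) The flaw is in the sentence ``contributes $n^{|\beta^{-1}(M)\cap M'|}$'': two distinct variables in $M'$ that $\beta$ collapses to the same $m\in M$ can still be sent to \emph{different} copies of $m$, so a non-injective $\beta$ can also produce $n^{|M'|}$ distinct restrictions. Thus, equality of growth rates on $I_n$ alone yields only $|M|=|M'|$ and $\beta(M')\subseteq M$; extracting injectivity requires either a richer family of test instances or a genuinely different argument (Cohen's proof proceeds via a careful analysis of so-called maximal multiset homomorphisms rather than a single growth-rate comparison).
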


To leverage this result on MV equivalence,
we associate two MV queries to our given sifo CQs in the
following way.
\begin{definition}
Fix a new relation symbol $T_0$ of arity the length of $\bar x$.
The MV queries $\tilde Q$ and $\tilde Q'$ are defined as
$Q_0,(Z-X)$ and $Q_0',(Z-X)$ respectively, where
\begin{align*}
Q_0 &= T_0(\bar x) \gets B \\
Q_0' &= T_0(\bar x) \gets B'
\end{align*}
\end{definition}

The following proposition now relates oid-equiv\-a\-lence
to MV-equivalence:

\begin{proposition}
  \label{theo:oid-equivalence-implies-combined-equivalence}
If $Q$ and $Q'$ are oid-equivalent, then the MV queries
$\tilde Q$ and $\tilde Q'$ are equivalent.
\end{proposition}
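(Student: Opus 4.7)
The plan is to fix an arbitrary instance $I$ and to show, for every tuple $\bar c$ of constants, that the multiplicity of $T_0(\bar c)$ in $\tilde Q(I)$ equals its multiplicity in $\tilde Q'(I)$. Unwinding the definition of combined semantics with multiset variables $M = Z-X$, the multiplicity of $T_0(\bar c)$ in $\tilde Q(I)$ is exactly
$$
  \#\{\gamma|_{Z-X} \mid \gamma\colon B\to I \text{ and } \gamma(\bar x)=\bar c\},
$$
with an analogous expression for $\tilde Q'(I)$.

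The key step will be to identify this number with $\sharp_{\bar c}(Q,I)$, the number of distinct oids appearing together with $\bar c$ in $Q(I)$. Because of the normal form achieved in the previous subsection, $\bar z$ contains no repeated variable, so its entries are in bijection with $Z = (X\cap Z)\cup (Z-X)$. Given two matchings $\gamma_1,\gamma_2\colon B\to I$ with $\gamma_i(\bar x)=\bar c$, both agree on $X\cap Z$ (which is pinned down by $\bar c$), hence $f(\gamma_1(\bar z)) = f(\gamma_2(\bar z))$ holds if and only if $\gamma_1|_{Z-X} = \gamma_2|_{Z-X}$. Therefore the map $\gamma\mapsto f(\gamma(\bar z))$ factors through $\gamma|_{Z-X}$ and induces a bijection between distinct restrictions to $Z-X$ and distinct oids paired with $\bar c$; the same holds on the $Q'$ side.

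Finally, oid-equivalence of $Q$ and $Q'$ supplies, for every $I$, an oid-isomorphism $\rho\colon Q(I)\to Q'(I)$; since $\rho$ fixes all constants, it must send every extended fact $T(\bar c,o)$ to a fact $T(\bar c,\rho(o))$ with the same tuple of constants. Consequently $\sharp_{\bar c}(Q,I) = \sharp_{\bar c}(Q',I)$ for every $\bar c$, and combining this with the bijection of the previous paragraph yields that the multiplicities in $\tilde Q(I)$ and $\tilde Q'(I)$ coincide on every head fact, giving the desired MV equivalence. The one delicate point is the bijection in the middle paragraph, whose validity hinges on $\bar z$ being duplicate-free, a property already secured via Lemma~\ref{lem:rearrc}; apart from that, the argument is direct and produces no new complications.
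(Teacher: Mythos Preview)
Your proof is correct and follows essentially the same approach as the paper: both identify the multiplicity of $T_0(\bar c)$ in $\tilde Q(I)$ with $\sharp_{\bar c}(Q,I)$ (using that $\bar z$ is duplicate-free and that $\gamma|_X$ is pinned down by $\bar c$), and then use the oid-isomorphism to transfer these counts from $Q$ to $Q'$. Your write-up is in fact a bit more explicit than the paper's about why the map $\gamma\mapsto f(\gamma(\bar z))$ induces a bijection between restrictions $\gamma|_{Z-X}$ and oids, and about why the oid-isomorphism preserves $\sharp_{\bar c}$; these elaborations are welcome but do not constitute a different route.
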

\begin{proof}
Let $I$ be an instance.  We must show that the multisets
$\tilde Q(I)$ and $\tilde Q'(I)$ are equal.
Since $Q$ and $Q'$ are oid-equivalent,
the ground sets $Q_0(I)$ and
$Q_0'(I)$ of $\tilde Q(I)$ and $\tilde Q'(I)$ are already equal.
We must show that the element multiplicities are the same as well.

Let $T_0(\bar c)$ be an arbitrary element of $Q_0(I)$.
By the semantics of oCQs,
we have the following equalities:
\begin{align*}
\#_{\bar c} (Q,I) & =
\# \{ \gamma|_{X \cup Z} \mid
\text{$\gamma : B \to I$ and
$\gamma(\bar x)=\bar c$}\} \\
\#_{\bar c} (Q',I) & =
\# \{ \gamma|_{X \cup Z} \mid
\text{$\gamma : B' \to I$ and
$\gamma(\bar x)=\bar c$}\}
\end{align*}
Since $Q(I)$ and $Q'(I)$ are oid-isomorphic, the left-hand sides
of the above two equalities are equal.  Hence, the right-hand
sides are equal as well.  But these are precisely the
multiplicities of $T_0(\bar c)$ in $\tilde Q(I)$ and $\tilde
Q'(I)$ respectively.
\end{proof}

The following proposition further relates
MV equivalence to equivalence of the flattenings up to permutation:

\begin{proposition}
   \label{theo:combined-equivalence-implies-containment-modulo-permutation}
If the MV queries $\tilde Q$ and $\tilde Q'$ are equivalent
then there exists a permutation $\pi$ of $Z-X$ such that
$\fltg Q^\pi$ and $\fltg Q'$ are equivalent.
\end{proposition}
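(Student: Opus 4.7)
The plan is to invoke Cohen's theorem (Theorem~\ref{cohentheorem}) and then upgrade the resulting multiset homomorphisms to a pair that forms inverse permutations on $Z-X$, via a finite-order argument. Since $\tilde Q$ and $\tilde Q'$ are MV-equivalent, Cohen's theorem yields multiset homomorphisms $h\col Q_0 \to Q_0'$ and $h'\col Q_0' \to Q_0$. By definition, each of them sends the head atom $T_0(\bar x)$ to itself (so fixes every variable in $X$ componentwise), is injective on the multiset variables $Z-X$, and maps $Z-X$ into $Z-X$. Because $Z-X$ is finite, $h$ and $h'$ both restrict to permutations of $Z-X$.

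The main obstacle is that the naive attempt --- set $\pi := h|_{Z-X}$ and use $h$ and $h'$ directly as the two homomorphisms between $\fltg Q^\pi$ and $\fltg Q'$ --- requires $h|_{Z-X}$ and $h'|_{Z-X}$ to be mutually \emph{inverse} permutations, which Cohen's theorem does not guarantee. I overcome this by iterating. Consider the permutation $\sigma := (h' \circ h)|_{Z-X}$ of the finite set $Z-X$ and let $n$ be its order, so that $\sigma^n = \mathrm{id}$. Define
\begin{align*}
g_1 &:= h \col Q_0 \to Q_0', \\
g_2 &:= (h' \circ h)^{n-1} \circ h' \col Q_0' \to Q_0.
\end{align*}
Both are compositions of homomorphisms fixing $X$, hence are themselves homomorphisms fixing $X$. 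By construction, $(g_2 \circ g_1)|_{Z-X} = \sigma^n = \mathrm{id}$; and since $g_1|_{Z-X}$ is already a permutation of $Z-X$, the restriction $g_2|_{Z-X}$ must be its inverse.

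Set $\pi$ to be the permutation given by $g_2|_{Z-X}$, extended by the identity on $X$; this is the required permutation. It then remains to verify routinely that $g_1$ is a homomorphism $\fltg Q^\pi \to \fltg Q'$ and $g_2$ is a homomorphism in the opposite direction. For $g_1$: it sends $B$ into $B'$, fixes $\bar x$, and satisfies $g_1(\pi(z)) = z$ for every $z \in Z$ --- trivially for $z \in X$, and via the identity $g_1 \circ g_2 = \mathrm{id}$ on $Z-X$ (two mutually inverse permutations of a finite set) for $z \in Z-X$. For $g_2$: by telescoping the composition one gets $g_2(B') \subseteq B$; it fixes $\bar x$; and $g_2(\bar z) = \pi(\bar z)$ directly from the definition of $\pi$ together with the fact that $g_2$ fixes $X$. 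The classical homomorphism characterization of CQ equivalence then yields $\fltg Q^\pi \equiv \fltg Q'$, as required.
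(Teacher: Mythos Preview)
Your proof is correct and follows essentially the same approach as the paper. Both arguments invoke Cohen's theorem to obtain multiset homomorphisms $h$ and $h'$, observe that each restricts to a permutation of $Z-X$, and then iterate the composition $h'h$ to a power that makes it the identity on $Z-X$; your $g_2 = (h'h)^{n-1}h'$ is exactly the paper's reverse homomorphism, and your $\pi = g_2|_{Z-X}$ coincides with the paper's $\pi = (h|_{Z-X})^{-1}$.
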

\begin{proof}
By Theorem~\ref{cohentheorem},
there exist a multiset homomorphism
$h$ from $\tilde Q$ to $\tilde Q'$, and
a multiset homomorphism
$h'$ from $\tilde Q'$ to $\tilde Q$.
Since Theorem~\ref{cohentheorem} also implies
that $h$ is injective on $Z-X$ and that $h(Z-X)\subseteq Z-X$, we can
conclude that $h$ acts as a permutation on $Z-X$. Moreover, $h$
is the identity on $X$.  The same two properties hold for $h'$.

Now put $\pi = (h|_{Z-X})^{-1}$.  Then $h : \fltg Q^\pi \to \fltg
Q'$.  So it remains to find a homomorphism $h'' : \fltg Q' \to
\fltg Q^\pi$.  Thereto, note that $h'h$ acts as a permutation
on $Z-X$.  Since $Z-X$ is finite, there exists a nonzero natural
number $m$ such that $(h'h)^m$ is the identity on $Z-X$.
Equivalently, $(h'h)^{m-1}h'$ equals $\pi$ on $Z-X$.  We conclude
that $(h'h)^{m-1}h'$ is the desired homomorphism $h''$.
\end{proof}

We summarize the
three preceding Propositions in the following.

\begin{theorem}
  \label{theoroid}
Consider two sifo CQs
\begin{align*}
Q &\ = \ T(\tpl x,f(\tpl z)) \gets B \\
Q'&\ = \ T(\tpl x,f'(\tpl z)) \gets B'
\end{align*}
where $Q$ and $Q'$ have identical tuples $\bar x$ and $\bar
z$ of distinguished and
creation variables, and where $\bar z$ contains no variable more
than once.  Denote the sets of distinguished and creation variables
by $X$ and $Z$, respectively.

The following are equivalent:
\begin{enumerate}
\item The sifo CQs $Q$ and $Q'$ are oid-equivalent;
\item The MV queries $\tilde Q$ and $\tilde Q'$ are equivalent;
\item There is a permutation $\pi$ of $Z-X$ such that the
classical CQs $\fltg Q^\pi$ and $\fltg Q'$ are equivalent.
\end{enumerate}
\end{theorem}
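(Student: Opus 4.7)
The plan is straightforward: the three preceding propositions have been set up so that they already form a cycle of implications among the three conditions, so the proof is essentially a bookkeeping step assembling them.

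First I would observe that Proposition~\ref{theo:oid-equivalence-implies-combined-equivalence} gives the implication (1)$\Rightarrow$(2), namely that oid-equivalence of $Q$ and $Q'$ forces the associated MV queries $\tilde Q$ and $\tilde Q'$ to be equivalent. Next, Proposition~\ref{theo:combined-equivalence-implies-containment-modulo-permutation} gives (2)$\Rightarrow$(3): equivalence of $\tilde Q$ and $\tilde Q'$ yields a permutation $\pi$ of $Z\setminus X$ for which $\fltg Q^\pi$ and $\fltg Q'$ are equivalent as classical CQs. Finally, Proposition~\ref{theo:containment-modulo-permutation-implies-oid-equivalence} closes the cycle with (3)$\Rightarrow$(1).

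Chaining the three implications gives (1)$\Rightarrow$(2)$\Rightarrow$(3)$\Rightarrow$(1), which establishes the equivalence of all three statements. No additional argument is needed beyond citing the propositions and noting that the assumptions of the theorem (the normal-form hypotheses that $\bar x$ and $\bar z$ are shared between $Q$ and $Q'$, and that $\bar z$ has no repeated variable) are exactly the standing assumptions under which the three propositions were proved.

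There is no real obstacle here, since the conceptual content has already been distributed over the three propositions: the semantic bridge from oid-isomorphism to multiplicities of matchings (Proposition~\ref{theo:oid-equivalence-implies-combined-equivalence}), the application of Cohen's Theorem~\ref{cohentheorem} on MV equivalence together with the permutation-cycle argument using $(h'h)^{m-1}h'$ (Proposition~\ref{theo:combined-equivalence-implies-containment-modulo-permutation}), and the direct construction of an oid-isomorphism from the permuted flattenings (Proposition~\ref{theo:containment-modulo-permutation-implies-oid-equivalence}). The proof of the theorem itself can therefore be stated in a single short paragraph that simply records this cycle.
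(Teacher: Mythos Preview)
Your proposal is correct and matches the paper's approach exactly: the paper states the theorem as a summary of the three preceding propositions, and the proof is precisely the cycle (1)$\Rightarrow$(2)$\Rightarrow$(3)$\Rightarrow$(1) obtained by invoking Propositions~\ref{theo:oid-equivalence-implies-combined-equivalence}, \ref{theo:combined-equivalence-implies-containment-modulo-permutation}, and \ref{theo:containment-modulo-permutation-implies-oid-equivalence} in that order.
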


\subsection{Computational complexity}

The results of this section imply the following:

\begin{corollary}
Testing oid-equivalence of sifo CQs is NP-complete.
\end{corollary}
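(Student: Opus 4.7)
The plan is to combine Theorem~\ref{theoroid} with the classical NP-completeness of equivalence for standard conjunctive queries.

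For the NP upper bound, I would describe a nondeterministic polynomial-time procedure that first brings the input sifo CQs $Q$ and $Q'$ into the normal form required by Theorem~\ref{theoroid}, and then verifies condition~(3) of that theorem. The normalization follows the steps developed in Lemmas~\ref{lem:rearrc}--\ref{lem:non-distinguished-creation-vars}: (i)~deterministically remove duplicate occurrences from $\tpl z$ and $\tpl z'$; (ii)~compute the unique bijective renaming $\sigma\colon X \to X'$ with $\sigma(\tpl x) = \tpl x'$ if it exists (reject otherwise, by Lemma~\ref{lem:rearrd}) and apply it to~$Q$; (iii)~verify $X \cap Z = X \cap Z'$; (iv)~verify $|Z \setminus X| = |Z' \setminus X|$; and (v)~fix any bijection between the remaining non-distinguished creation variables and reorder $\tpl z'$ to coincide with $\tpl z$. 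Each of these steps runs deterministically in polynomial time. Then, by the equivalence of (1) and (3) in Theorem~\ref{theoroid}, it suffices to guess a permutation $\pi$ of $Z \setminus X$ together with two variable mappings $h$ and $h'$, and to verify in polynomial time that $h$ is a homomorphism $\fltg Q^\pi \to \fltg Q'$ and $h'$ is a homomorphism $\fltg Q' \to \fltg Q^\pi$.

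For NP-hardness, I would reduce from equivalence of classical conjunctive queries. Given two CQs $Q_1 = H(\tpl x) \gets B_1$ and $Q_2 = H(\tpl x) \gets B_2$ with $\tpl x$ a tuple of $k$ distinct variables, set $\hat Q_i = T(\tpl x, f(\tpl x)) \gets B_i$ using a fresh head predicate $T$ and a fresh $k$-ary function symbol~$f$. For every instance $I$, $\hat Q_i(I)$ consists precisely of the extended facts $T(\tpl c, f(\tpl c))$ for $\tpl c \in Q_i(I)$, and each oid occurs in exactly one such fact. Hence any oid-isomorphism between $\hat Q_1(I)$ and $\hat Q_2(I)$ is forced to be the identity on oids, which is possible if and only if $Q_1(I) = Q_2(I)$. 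Thus $\hat Q_1$ and $\hat Q_2$ are oid-equivalent iff $Q_1$ and $Q_2$ are equivalent.

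The main obstacle is the bookkeeping for the normalization stage in the upper bound; in particular, one needs to observe that the renaming $\sigma$ of Lemma~\ref{lem:rearrd} is determined by the positional equality pattern of $\tpl x$ and $\tpl x'$ and can therefore be computed deterministically rather than guessed. Once this is settled, NP-membership is immediate from Theorem~\ref{theoroid}, and the matching lower bound is given by the reduction above.
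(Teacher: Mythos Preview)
Your proposal is correct and follows essentially the same approach as the paper: identical normalization pipeline for the upper bound, followed by guessing $\pi$ and two homomorphisms to verify condition~(3) of Theorem~\ref{theoroid}. The only cosmetic difference is in the hardness reduction: the paper embeds a classical CQ as a sifo CQ with a \emph{nullary} creation function, whereas you use $f(\tpl x)$ with all distinguished variables as creation arguments; both embeddings make oid-equivalence collapse to classical equivalence for the same reason, so this is not a substantive divergence.
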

\begin{proof}
Assume given sifo CQs $Q$ and $Q'$ with the same head predicate:
\begin{align*}
Q \ &= \ T(\tpl x,f(\tpl z)) \gets B
\\
Q'\ &= \ T(\tpl x',f'(\tpl z')) \gets B'.
\end{align*}
Let $X$, $X'$, $Z$ and $Z'$ denote the sets of variables
occurring in $\bar x$, $\bar x'$, $\bar z$ and $\bar z'$,
respectively.

To test oid-equivalence, we begin by removing duplicates in $\bar
z$ and $\bar z'$, as justified by Lemma~\ref{lem:rearrc}.
Note that $\bar x$ and $\bar x'$ have the same length $k$, because of
the fixed arity of $T$.  So we can write $\bar x=x_1,\dots,x_k$
and $\bar x'=x'_1,\dots,x'_k$.  Consider the mapping $\sigma =
\{(x_1,x'_1),\allowbreak \dots, \allowbreak
(x_k,x'_k)\}$.  We test if $\sigma$ is a bijection
from $X$ to $X'$; if not, then $Q$ and $Q'$ are not oid-equivalent by
Lemma~\ref{lem:rearrd}.
If $\sigma$ is a bijection, we can safely replace every variable
$x'$ in $X'$ by $\sigma^{-1}(x')$, which yields a sifo CQ that is
oid-equivalent to $Q'$.  Hence, from now on we may assume that
$\bar x=\bar x'$ and in particular $X=X'$.

Next, we test whether $X \cap Z = X \cap Z'$ and whether
$Z-X$ and $Z'-X$ have the same
cardinality; if one of the two tests fails then $Q$ and $Q'$ are
not oid-equivalent by Lemmas
\ref{lem:distinguished-creation-vars}
and
\ref{lem:non-distinguished-creation-vars}.  Otherwise, we can
rename the variables in $Z'-X$, so that we may
assume that $\bar z=\bar z'$.

We are now left in the situation where $Q$ and $Q'$ are in the
general forms (\ref{eqn:normal-form-one})
and (\ref{eqn:normal-form-two}) from
Subsection~\ref{sec:charoid}, to which Theorem~\ref{theoroid}
applies.  By the third statement of
this theorem we can test oid-equivalence of $Q$ and
$Q'$ in NP by guessing a permutation $\pi$ and two homomorphisms
between $\fltg Q^\pi$ and $\fltg Q'$ in both directions.

NP-hardness follows immediately because the problem has equivalence of
classical CQs as a special case, which is well known to be
NP-hard.  Indeed, oid equivalence of sifo CQs $Q$ and $Q'$ in the
special case where the creation functions are nullary, amounts to
classical equivalence when we ignore the function terms in the
heads.
\end{proof}


\section{Logical entailment of sifo CQs interpreted as schema
mappings} \label{seclog}

Object-creating CQs, and sifo CQs in particular, can also be interpreted
alternatively as schema mappings rather than as queries.
Specifically, consider a sifo CQ $Q$ of the general form
$T(\bar x,f(\bar z)) \gets B$ over the database
schema $\schema S$.  Let $\bar v$ be the sequence of all
variables used in $B$.  Then we may view $Q$ as a second-order
implicational statement
over the augmented schema $\schema S \cup \{T\}$, as
follows:
$$ \exists f \forall \bar v (B \to H) $$
Here, $H$ is the head
and $B$ is conveniently used to stand for the conjunction of its elements.
Note that this formula is second-order because it existentially quantifies
a function $f$; we denote the above formula by $\sotgd Q$.
This formula belongs to the well-known class of second-order
tuple-generating dependencies (SO-tgds).
More specifically,
it is a \emph{plain} SO-tgd \cite{plainsotgd}.

Syntactically,
the plain SO-tgds coming from sifo CQs
in this manner form a restricted class of SO-tgds, defined by the
following restrictions:
\begin{itemize}
\item
Plain SO-tgd may consist of
multiple rules; sifo CQs consist of a single rule.
\item
The head of a plain SO-tgd may consist of multiple atoms;
the head of a sifo CQ consists of a single atom.  (This is similar
to GAV mappings \cite{Lenzerini02,ck_cacm_mappings}, although the
classical notion of GAV mapping does not use function symbols.)
\item
There is only one function symbol, which moreover can be applied
only once in the head.
\end{itemize}

When interpreting a sifo CQ $Q$ as an SO-tgd, the semantics becomes
that of a schema mapping.  Specifically, let $I$ be an instance
over $\schema S$, considered as a source instance, and let $J$ be
an instance over $\{T\}$, considered as a target instance.  Then
$(I,J)$ together form an instance over the augmented schema
$\schema S \cup \{T\}$.  Now we say that $(I,J)$ satisfies $Q$,
denoted by $(I,J) \models Q$, if the structure $(\adom I \cup \adom J,I,J)$
satisfies $\sotgd Q$ under the standard semantics of second-order
logic, using $\adom I \cup \adom J$ as the universe of the structure.

The following example and remark illustrate that the semantics of sifo CQs
as SO-tgds is quite different from their semantics as
object-creating queries.


\begin{table}
\centering
\begin{tabular}[t]{|ll|}
\multicolumn{2}{c}{Family} \\
\hline
beth & jones \\
ben & jones \\
eric & simpson \\
emma & smith \\
\hline
\end{tabular}
\qquad
\begin{tabular}[t]{|ll|}
\multicolumn{2}{c}{Family} \\
\hline
beth & jones \\
ben & jones \\
eric & jones \\
emma & jones \\
\hline
\end{tabular}
\caption{Instances $J_1$ and $J_2$ from
Example~\ref{exweirdfamily}.}
\label{tabtwosolutions}
\end{table}

\begin{table}
\centering
\begin{tabular}[t]{|ll|}
\multicolumn{2}{c}{Family} \\
\hline
beth & jones \\
ben & murphy \\
eric & simpson \\
emma & smith \\
\hline
\end{tabular}
\caption{Instance $J_3$ from
Example~\ref{exweirdfamily}.}
\label{tabnosolution}
\end{table}

\begin{example}
\label{exweirdfamily}
Let us consider again our query from Example 1. As we have mentioned in the Introduction, 
we can now write it as an SO-tgd as follows:
$$
\exists f \forall x \forall y \forall c (\mathit{Mother}(c, x)
\land \mathit{Father} (c,y) \to \mathit{Family}(c,f(x,y)))
$$

Take the instance $I$ consisting of the Mother
and Father facts listed in Table~\ref{exfamily}, and take the
instances $J_1$ and $J_2$ consisting of the Family facts listed
in Table~\ref{tabtwosolutions} left and right respectively.  Then
both pairs $(I,J_1)$ and $(I,J_2)$ satisfy the SO-tgd.  For $J_1$
this is witnessed by the following function $f$:
$$ \begin{array}{lll}
x & y & f(x,y) \\
\hline
\rm anne & \rm adam & \rm jones \\
\rm claire & \rm carl & \rm simpson \\
\rm diana & \rm carl & \rm smith
\end{array} $$
For $J_2$ this is
witnessed by the function that simply maps everything to jones.

In contrast, for $J_3$ consisting of the Family
facts listed in Table~\ref{tabnosolution}, the pair $(I,J_3)$
does not satisfy the SO-tgd.  Indeed, suppose there would exist a
function $f$ witnessing the truth of the formula on $(I,J_3)$.
Since beth has anne as
mother and adam as father,
the fact $$ {\rm Family}({\rm beth}, f({\rm anne},{\rm adam})) $$
must belong to $J_3$.  The only Family-fact with beth in the
first position is $$ \rm Family(beth,jones), $$ so we conclude
$$f({\rm anne},{\rm adam}) = {\rm jones}.$$
Furthermore, since ben also has anne as mother and adam as
father, the fact
$${\rm Family}({\rm ben}, f({\rm anne},{\rm adam}))$$
must be in $J_3$.  The only Family-fact with ben in the
first position is $$\rm Family(ben,murphy),$$ however, so we must conclude
that
$$f({\rm anne},{\rm adam}) = {\rm murphy},$$ which is in
contradiction with the previous conclusion.
\end{example}

\begin{remark}
Note that, by the purely implicational nature of SO-tgds,
if $(I,J)$ satisfies an SO-tgd and $J \subseteq J'$, then also
$(I,J')$ satisfies the SO-tgd.  Hence, continuing the previous
example, for any instance $J'$ obtained by $J_1$ or $J_2$ by adding some
more Family-facts, the pair $(I,J')$ would still satisfy the
SO-tgd from the example.
\qed
\end{remark}

The above example and remark show that given a source instance
$I$, there are in general multiple possible target instances $J$
such that $(I,J) \models Q$.  This is in contrast to the
semantics of $Q$ as an oCQ, where $Q(I)$ is an extended instance
that is uniquely defined.  Still, there is a connection
between the oCQ semantics and the SO-tgd semantics.  Specifically,
$Q(I)$ can be viewed as a target instance in a canonical manner,
using \emph{oid-to-constant assignments} (oc-assignments for
short) defined as follows.

\begin{definition}
Let $I$ be a source instance and let $J$ be an extended instance
over $\{T\}$ such that $\consts J \subseteq \adom I$.
An \emph{oc-assignment}
for $J$ with respect to $I$ is an injective mapping
$\rho : \oids J \to \dom$ so that the
image of $\rho$ is disjoint from $\adom I$.
\end{definition}
Thus, $\rho$ assigns
to each non-constant data term from $J$ a different constant that
is not in $\adom I$.

We now observe the following obvious property giving a connection between
the oCQ semantics and the SO-tgd semantics:

\begin{proposition} \label{propoca}
Let $I$ be a source instance and let $\rho$ be an oc-assignment
for $Q(I)$ with respect to $I$.  Then $(I,\rho(Q(I))) \models Q$.
\end{proposition}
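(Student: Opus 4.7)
The plan is to exhibit an explicit witness for the existential quantifier $\exists f$ in $\sotgd Q = \exists f \forall \bar v (B \to T(\bar x,f(\bar z)))$. Let $U := \adom I \cup \adom{\rho(Q(I))}$, which will serve as the universe of the structure $(U, I, \rho(Q(I)))$. I aim to construct an interpretation $\hat f : U^{\ar f} \to U$ such that, for every valuation $\beta$ of $\bar v$ into $U$ with $\beta(B) \subseteq I$, the fact $T(\beta(\bar x), \hat f(\beta(\bar z)))$ lies in $\rho(Q(I))$.

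To define $\hat f$, for a tuple $\bar a \in U^{\ar f}$ I form the formal data term $f(\bar a)$; if this term belongs to $\oids{Q(I)}$, I set $\hat f(\bar a) := \rho(f(\bar a))$, and otherwise I let $\hat f(\bar a)$ be an arbitrary fixed element of $U$. This is unambiguous, since each tuple $\bar a$ determines a unique formal term $f(\bar a)$ and $\rho$ is a function. To verify the universal part, take any valuation $\beta : \var B \to U$ with $\beta(B) \subseteq I$. Since every fact of $I$ uses only elements of $\adom I$, while the fresh constants in $U \setminus \adom I$ lie outside $\adom I$, every variable of $B$ is forced by $\beta$ into $\adom I$; hence $\beta$ is a matching $B \to I$ in the oCQ sense of Section~\ref{secdbcq}. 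Consequently the extended fact $T(\beta(\bar x), f(\beta(\bar z)))$ lies in $Q(I)$, and in particular $f(\beta(\bar z)) \in \oids{Q(I)}$, using that a data term built by applying $f$ is never a constant by the convention of Section~\ref{prelim}. Therefore $\hat f(\beta(\bar z)) = \rho(f(\beta(\bar z)))$, and applying $\rho$ to $T(\beta(\bar x), f(\beta(\bar z))) \in Q(I)$ yields $T(\beta(\bar x), \hat f(\beta(\bar z))) \in \rho(Q(I))$, as required.

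The main obstacle is purely bookkeeping: keeping the two roles of $f$ straight (a formal function symbol producing syntactic data terms in the oCQ view, versus an actual set-theoretic function in the SO-tgd view), and confirming that the definition of $\hat f$ via $\rho$ correctly bridges them. The values of $\hat f$ on tuples $\bar a$ with $f(\bar a) \notin \oids{Q(I)}$ are immaterial, since such tuples cannot arise as $\beta(\bar z)$ for any matching $\beta : B \to I$, so the implication is trivially satisfied on those valuations.
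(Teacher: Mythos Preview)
Your proof is correct. The paper itself does not give a proof of this proposition; it merely calls it an ``obvious property'' and moves on, noting only that $Q(I)$ coincides with the chase of $I$ by $\sotgd Q$ in the sense of Fagin et al. Your argument supplies exactly the details one would expect: you interpret the second-order function symbol $f$ by the map $\bar a \mapsto \rho(f(\bar a))$ on the relevant tuples, observe that any valuation into the enlarged universe satisfying $B$ must actually land in $\adom I$ (because $\rho$ introduces only fresh constants disjoint from $\adom I$), and then read off the required $T$-fact directly from the definition of $Q(I)$. There is nothing to compare, since the paper's ``proof'' is empty; your write-up is the straightforward unpacking of why the claim is obvious.
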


In fact, $Q(I)$ corresponds to what Fagin et al.\ \cite{sotgd}
call the \emph{chase of $I$ with $\sotgd Q$}.

\subsection{Nested dependencies} \label{secnested}

We have introduced sifo CQs as a restricted class of plain
SO-tgds.  But actually, sifo CQs can also be considered as a
restricted form of so-called nested tgds \cite{miller_invention}.
Thereto, consider again a sifo CQ of the general form $T(\bar
x,f(\bar z)) \gets B$.  Let $\bar u$ be the sequence of all
variables from $B$, except for the creation variables (the
variables from $\bar z$).  Furthermore, let $w$ be a fresh variable not
occurring in $B$, and let $H'$ be the atom $T(\bar x,w)$.
We can now associate to $Q$ the
following implicational statement, denoted by $\ntgd Q$:
$$ \forall \bar z \exists w \forall \bar u (B \to H') $$
Note that $\ntgd Q$ is now a first-order formula, but
it is clear that $\ntgd Q$ is logically equivalent to $\sotgd Q$.
Hence, the schema mappings arising from sifo CQs are not
essentially second-order in nature.

\subsection{Logical entailment}

In Section~\ref{secoidequiv} we have shown that equivalence of
sifo CQs as object-creating queries is decidable.  Now that we
have seen that sifo CQs can also be given a semantics as schema
mappings, we may again ask if equivalence under this alternative
semantics is decidable.  The answer is affirmative; we have seen
in the previous subsection that sifo CQ mappings belong to the
class of nested dependencies, and logical implication of nested
dependencies has recently been shown to be decidable
\cite{pichler_nestedep}.  When this general implication test for
nested dependencies is applied specifically to sifo CQ schema
mappings, it can be implemented in non-deterministic polynomial
time.  Hence, logical entailment (and also logical equivalence)
of sifo CQ schema mappings is NP-complete.

In the present section, we present a specialized logical
entailment test for sifo CQ schema mappings which is much simpler and
more elegant, and provides more insight in the problem by
relating it to testing implication of a join dependency by a
conjunctive query (Theorem~\ref{theorentail}).   Interestingly,
there is a striking correspondence between the general
implication test when applied to sifo CQs, and the strategy we
use to prove our theorem.  An in-depth comparison will be
given in Section~\ref{secdiscuss}, after we have stated the
Theorem formally and have seen its proof.

Formally, given two schema mappings $\M$ and $\M'$ from a source
schema $\schema S$ to a target schema $\{T\}$, we say that
\emph{$\M$ logically entails $\M'$} if the following implication
holds for every instance $I$ over $\schema S$ and every instance
$J$ over $\{T\}$: $$ \text{$(I,J)$ satisfies $\M$} \quad
\Rightarrow \quad \text{$(I,J)$ satisfies $\M'$}. $$

Referring to the view of sifo CQs as SO-tgds introduced above, we
now define:
\begin{definition}
Let $Q$ and $Q'$ be two sifo CQs with the same head predicate, and
over the same database sche\-ma.
We say that \emph{$Q$ logically entails $Q'$} if $\sotgd{Q}$
logically entails $\sotgd{Q'}$.
\end{definition}

\begin{table}
$$
\begin{array}[t]{|lll|}
\multicolumn{3}{c}{I} \\
\hline
a_1 & b & c \\
a_2 & b & c \\
\hline
\end{array}
\qquad
\begin{array}[t]{|ll|}
\multicolumn{2}{c}{J} \\
\hline
a_1 & d_1 \\
a_2 & d_2 \\
\hline
\end{array}
$$
\caption{Instances used in Example~\ref{exentail}.}
\label{tabnotentail}
\end{table}

\begin{example}
\label{exentail}
Recall the sifo CQs $Q$ and $Q'$ from
Example~\ref{exnotoidequiv}:
\begin{align*}
Q & = T(x,f(y)) \gets R(x,y,z) \\
Q' & = T(x,f'(x,y)) \gets R(x,y,z)
\end{align*}
It is clear that
$Q$ logically entails $Q'$.  Indeed, if there exists a function
$f$ witnessing the truth of $\sotgd Q$, then we can easily define
a function $f'$ witnessing the truth of $\sotgd{Q'}$ by defining
$f'(x,y) := f(y)$.

Conversely, however, $Q'$ does not logically entail $Q$.  Indeed,
Table~\ref{tabnotentail} shows $(I,J)$ where $(I,J) \models Q'$
but $(I,J) \not \models Q$.
\end{example}

\begin{example}
\label{ex2entail}
Recall the sifo CQs $Q$ and $Q'$ from
Example~\ref{ex2notoid}:
\begin{align*}
Q & = T(x,f(x)) \gets R(x,y,z) \\
Q' & = T(x,f'(x,y,z)) \gets R(x,y,z)
\end{align*}
Although $Q$ and $Q'$ are not oid-equivalent, they are logically
equivalent: they logical entail each other.
The logical entailment of $Q'$ by $Q$ is again clear.
To see the converse direction, assume $f'$
witnesses the truth of $\sotgd{Q'}$.
Then we define $f(x)$ for
any $x$ as
follows: if there exists a pair $(y,z)$ such that $R(x,y,z)$ holds,
we fix one such pair $(y,z)$ arbitrarily and define $f(x) :=
f'(x,y,z)$.
If no such $y$ and $z$ exist, we may define $f(x)$ arbitrarily.  It
is now clear that this $f$ witnesses the truth of $\sotgd Q$.
\end{example}

\begin{example}
Consider the sifo CQs:
\begin{align*}
Q & = T(x,f(z_1)) \gets R(z_1,x),R(z_1,z_2) \\
Q' & = T(x,f'(z_1,z_2)) \gets R(z_1,x),R(z_1,z_2)
\end{align*}
Also here, $Q$ and $Q'$ logically entail each other.
The logical entailment of $Q'$ by $Q$ is again clear.
To see the converse direction, we can use a reasoning similar to
that used in Example~\ref{ex2entail}. Assume $f'$
witnesses the truth of $\sotgd{Q'}$.  Then we define $f(z_1)$ for
any $z_1$ as follows: if there exists $z_2$ such that $R(z_1,z_2)$
holds, we fix one such $z_2$ arbitrarily and define $f(z_1) :=
f'(z_1,z_2)$.  If no such $z_2$ exists, we may define $f(z_1)$
arbitrarily.  The function $f$ thus defined witnesses the truth of
$\sotgd Q$.

Note that the kind of reasoning used here and in Example~\ref{ex2entail}
does not work in the case of Example~\ref{exentail}.
In Theorem~\ref{theorentail} we will characterize
formally when this kind of
reasoning is correct.
\end{example}

Example~\ref{ex2entail} shows that logical equivalence (logical
entailment in both directions) does not imply oid-equivalence of
sifo CQs.  We will see in Theorem~\ref{oid2log} that
the other direction does hold.

\subsection{Join dependencies and tableau queries}

In our characterization of sifo CQ logical entailment we use a
number of concepts from classical relational database theory
\cite{ahv_book}, which we recall here briefly.

Recall that a \emph{relation scheme} is a finite set of elements
called \emph{attributes}.  It is customary to denote the union of
two relation schemes $X$ and $Y$ by juxtaposition, thus writing
$XY$ for $X \cup Y$.

A \emph{tuple} over a relation scheme $U$ is a function from $U$
to $\dom$.  A \emph{relation} over $U$ is a finite set of tuples
over $U$.

Let $t$ be a tuple over $U$ and let $X \subseteq U$.  The
restriction of $t$ to $X$ is denoted by $t[X]$.  The
\emph{projection} $\pi_X(r)$ of a relation $r$ over $U$ equals
$\set{t[X] \mid t \in r}$.

We now turn to tableau queries, which are an alternative
formalization of conjunctive queries so that the result of a
query is a set of tuples rather than a set of facts.  Let
$\schema S$ be a database schema, and let $B$ be a finite set of
atoms with predicates from $\schema S$, as would be the body of a
conjunctive query over $\schema S$.  Let $V = \var B$.  For any
$U \subseteq V$, the pair $(B,U)$ is called a \emph{tableau
query} over $\schema S$.  When applied to an instance $I$ over
$\schema S$, this tableau query returns a relation over $U$ in
the following manner.  Let $\Mat BI$ be the set of all matchings
of $B$ in $I$.  Using variables for attributes, $V$ can be viewed
as a relation scheme.  Under this view, every valuation on $V$ is
a tuple over $V$, and thus $\Mat BI$ is a relation over $V$.
We now define the result of $(B,U)$ on input $I$ to be
$\pi_U(\Mat BI)$.  This result is denoted by $(B,U)(I)$.

We finally recall join dependencies.  Let $t_1$ and $t_2$ be
tuples over the relation schemes $U_1$ and $U_2$, respectively.
If $t_1$ and $t_2$ agree on $U_1 \cap U_2$, the union $t_1 \cup t_2$
(where we take the union of two functions, viewed as sets of
pairs) is a well-defined tuple over the relation scheme $U_1U_2$.
The \emph{natural join} $r_1 \Join r_2$, for relations $r_1$ and
$r_2$ over $U_1$ and $U_2$, respectively, then equals $$ \{t_1 \cup
t_2 \mid t_1 \in r_1 \ \& \ t_2 \in r_2 \ \& \ t_1[U_1 \cap U_2]
= t_2[U_1 \cap U_2]\}. $$

Consider now any relation $r$ over some relation
scheme $U$.  Let $U_1$ and $U_2$ be subsets of $U$ (not
necessarily disjoint) such that
$U = U_1U_2$.  Then $r$
satisfies the \emph{join dependency (JD)} $U_1 \Join U_2$ if $r =
\pi_{U_1}(r) \Join \pi_{U_2}(r)$.  Note that the containment from
left to right is trivial, so one only needs to verify the
containment $\pi_{U_1}(r) \Join \pi_{U_2}(r) \subseteq r$.

The logical implication of JDs by tableau queries is well
understood and can be solved by the chase procedure
with NP complexity
\cite{klugprice,ahv_book}.  Formally, a tableau query $Q=(B,U)$ over
$\schema S$ is said to \emph{imply} a JD over $U$ if for every
instance $I$ over $\schema S$, the relation $Q(I)$ satisfies this
JD.

\subsection{Decidability of sifo CQ logical entailment}

We consider two sifo CQs $Q$ and $Q'$ with the same head predicate:
\begin{align*}
Q & = T(\bar x,f(\bar z)) \gets B \\
Q' & = T(\bar x',f'(\bar z')) \gets B'
\end{align*}

\begin{remark}
We assume $Q$ and $Q'$ to have their function symbol in the same
position in the head (here taken to be the last position).
This is justified because otherwise $Q$ could never logically
entail $Q'$.  In proof, suppose the function symbol in the head
of $Q'$ would not be in the last position.  Then we have a variable
$x'$ from $B'$ in the last position.  Now consider an instance $I$
such that both $Q(I)$ and $Q'(I)$ are nonempty.  (Such an
instance could be constructed by taking the disjoint union of $B$
and $B'$ and substituting constants for variables.) Let $\rho$
by an oc-assignment for $Q(I)$ with respect to $I$.  By
Proposition~\ref{propoca}, we have $(I,\rho(Q(I))) \models Q$.
In $\rho(Q(I))$, none of the elements in the last position of a
$T$-fact belongs to $\adom I$.  But then $(I,\rho(Q(I)))$ cannot
satisfy $Q'$.  Indeed, since $Q'(I)$ is nonempty, there is a
matching $\alpha':B'\to I$. In any $J'$ such that $(I,J')
\models Q'$, there needs to be a $T$-fact with $\alpha'(x')$ in
the last position, and $\alpha'(x') \in \adom I$.  We conclude
that $Q$ does not logically entail $Q'$.  
\qed
\end{remark}

In what follows we use $X$, $Z$ and $Z'$ to denote the sets of
variables appearing in the tuples $\bar x$, $\bar z$ and $\bar
z'$, respectively.

We establish:

\begin{theorem} \label{theorentail}
$Q$ logically entails $Q'$ if and only if there exists
a homomorphism $h : B \to B'$ satisfying the
following conditions:
\begin{enumerate}
\item $h(\bar x) = \bar x'$;
\item $h(X \cap Z) \subseteq Z'$;
\item Let $Y_h := h^{-1}(Z')$, i.e.,
$ Y_h = \{ y \in \var B \mid h(y) \in Z' \} $.
Then the tableau query $(B,XY_hZ)$ implies the join
      dependency $XY_h \Join Y_hZ$.
\end{enumerate}
\end{theorem}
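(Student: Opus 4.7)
The plan is to prove the biconditional by two separate implications. The ``if'' direction is a constructive synthesis of a witness $f'$ for $\sotgd{Q'}$ from one for $\sotgd Q$, while the ``only if'' direction extracts $h$ from a canonical-instance entailment and then verifies conditions~2 and~3 via perturbation arguments.

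For the ``if'' direction, I would assume $h: B \to B'$ satisfies conditions~1--3 and that $(I,J) \models \sotgd Q$ via some function $f$, and build a witness $f'$ as follows: for every tuple $\bar e$ arising as $\beta(\bar z')$ for some matching $\beta: B' \to I$, fix one such $\beta_{\bar e}$ and set $f'(\bar e) := f((\beta_{\bar e}\circ h)(\bar z))$; on other arguments $f'$ is arbitrary. To verify that $f'$ witnesses $\sotgd{Q'}$, take an arbitrary matching $\beta$ with $\beta(\bar z') = \bar e$ and compare $\gamma := \beta\circ h$ with $\gamma_{\bar e} := \beta_{\bar e}\circ h$. Since $\beta$ and $\beta_{\bar e}$ agree on $Z'$, the matchings $\gamma$ and $\gamma_{\bar e}$ agree on $Y_h = h^{-1}(Z')$, and condition~2 (which implies $X\cap Z \subseteq Y_h$) forces agreement also on $X\cap Z$, covering the intersection of the JD's components $XY_h$ and $Y_hZ$. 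Condition~3 then yields a matching $\gamma^*: B \to I$ with $\gamma^*|_{XY_h} = \gamma|_{XY_h}$ and $\gamma^*|_{Y_hZ} = \gamma_{\bar e}|_{Y_hZ}$, so $\gamma^*(\bar x) = \beta(\bar x')$ (using condition~1) and $\gamma^*(\bar z) = \gamma_{\bar e}(\bar z)$; applying $\sotgd Q$ to $\gamma^*$ delivers exactly the needed $T(\beta(\bar x'), f'(\bar e)) \in J$.

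For the ``only if'' direction, I would use the canonical pair $(I_0, J_0) := (B', \rho(Q(B')))$ for some oc-assignment $\rho$; Proposition~\ref{propoca} gives $(I_0,J_0) \models \sotgd Q$, and hence by entailment $(I_0,J_0) \models \sotgd{Q'}$. Applying the resulting $f'$-witness to the identity matching $B' \to B'$ forces a fact $T(\bar x', f'(\bar z'))$ in $J_0$; since $J_0$ consists of facts $T(\gamma(\bar x),\rho(f(\gamma(\bar z))))$ for homomorphisms $\gamma: B \to B'$, this yields an $h: B \to B'$ with $h(\bar x) = \bar x'$, i.e.\ condition~1. For condition~2, a duplication argument in the spirit of Lemma~\ref{lem:distinguished-creation-vars} shows that if some $y \in X\cap Z$ had $h(y)\notin Z'$, then taking $I$ to be the union of $B'$ with its copy under the renaming sending $h(y)$ to a fresh variable produces two matchings of $B'$ sharing their $\bar z'$-image but whose required $T$-facts in the chase target $\rho(Q(I))$ carry distinct oids, contradicting the existence of a single $f'$-value. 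Condition~3 is proved by a similar contrapositive: a pair of $B$-matchings $\gamma_1, \gamma_2$ witnessing a failure of the JD on some $I$ is used, via $h$, to exhibit two $B'$-matchings into a suitable source whose required $f'$-values are forced by the chase target to be distinct oids, again breaking entailment.

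The main obstacle is deriving condition~3 in the ``only if'' direction: unlike condition~2, which is refuted by a single renaming, the JD is a semantic property quantified over all instances, and turning its failure into a failure of entailment requires engineering both a source and a chase-style target in which the fan-out caused by the missing merged matching $\gamma^*$ survives as two $B'$-matchings demanding distinct and irreconcilable oids, without extraneous matchings of $B$ populating $J$ with a reconciling $T$-fact. Calibrating the interaction of $h$, $\bar z$, and $\bar z'$ so that this distinctness persists is the delicate technical step.
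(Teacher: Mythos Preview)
Your sufficiency argument is correct and is essentially the paper's: both fix a representative matching of $B$ for each value of $\bar z'$ (you via $\beta_{\bar e}\circ h$, the paper via an arbitrary representative of the $Y_h$-equivalence class on $\Mat BI$), define $f'$ through $f$ on that representative, and invoke the JD to merge an arbitrary $\beta\circ h$ with the representative on $XY_h$ and $Y_hZ$.

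The necessity direction has a genuine gap. Your plan extracts $h$ from the simple canonical pair $(B',\rho(Q(B')))$ and then argues conditions~2 and~3 for \emph{that particular} $h$ by perturbation. Condition~2 happens to be independent of which $h$ you extracted, since $X\cap Z\subseteq X$ and $h|_X$ is forced by $h(\bar x)=\bar x'$; but condition~3 depends on $Y_h=h^{-1}(Z')$, which varies with $h$ on $\var B\setminus X$. Entailment only guarantees that \emph{some} head-preserving $h$ satisfies condition~3; your canonical instance can hand you a different one. Concretely, take $Q=T(x,f(z))\gets R(x,z,w)$ and $Q'=T(x',f'(z'))\gets R(x',a',z'),\,R(x',z',b')$. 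The map $h_2\colon(x,z,w)\mapsto(x',z',b')$ witnesses entailment (here $Y_{h_2}=\{z\}$ and the JD is trivial), yet $h_1\colon(x,z,w)\mapsto(x',a',z')$ is an equally legitimate output of your extraction: on $I_0=B'$ the target $J_0=\rho(Q(B'))$ contains two $T$-facts and either choice of $f'(z')$ witnesses $(I_0,J_0)\models Q'$, one yielding $h_1$ and the other $h_2$. For $h_1$ one has $Y_{h_1}=\{w\}$, and the JD $\{x,w\}\Join\{w,z\}$ fails on $\{R(1,2,3),R(4,5,3)\}$. Your contrapositive for condition~3, applied to $h_1$, would then attempt to refute an entailment that in fact holds.

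The paper avoids this by building a single richer source from which the right $h$ and all three conditions emerge at once: take $n{+}1$ disjoint ``colored'' copies of $B'$ glued along $Z'$, where $n=\ar f$. The single value $f'(\bar z')=f(\bar w)$ must serve every copy; by pigeonhole some color, say $0$, is absent from $\bar w$, and the matching $k\colon B\to I$ producing the head for color $0$ is then forced to send $\bar x$ into color $0$ (or white) and $\bar z$ into non-$0$ colors (or white). Stripping colors gives $h$; condition~2 falls out from the color clash on $X\cap Z$, and the homomorphism witnessing the JD for condition~3 is literally the map ``send $u$ to its $0$- or $1$-copy according to the color of $k(u)$''. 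This single-instance construction, not a sequence of perturbations of a fixed $h$, is the missing idea.
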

\paragraph{Proof of sufficiency.}
Let $(I,J) \models Q$, witnessed by the function $f$.  We must
show $(I,J) \models Q'$.  This means finding a function $f'$
witnessing the truth of $\sotgd{Q'}$ in $(I,J)$.

Call any two matchings $\alpha_1,\alpha_2 \in \Mat BI$ equivalent
if they agree on $Y_h$.  This is denoted by $\alpha_1 \equiv
\alpha_2$.  Let $\rho$ be any function from $\Mat BI$ to $\Mat
BI$ with the two properties, first, that $\rho(\alpha) \equiv
\alpha$ and, second, that $\alpha_1 \equiv \alpha_2$ implies
$\rho(\alpha_1) = \rho(\alpha_2)$.  Thus, $\rho$ amounts to
choosing a representative out of each equivalence class.  We
denote the application of $\rho$ by subscripting, writing
$\rho(\alpha)$ as $\rho_\alpha$.

Let us define $f'$ as follows.
Take any matching $\beta : B' \to I$.
Then we put $f'(\beta(\bar z')) := f(\rho_{\beta \circ h}(\bar z))$.
To see that this is well-defined, recall that $h(Y_h) \subseteq Z'$.
Hence, $\beta_1(\bar z') = \beta_2(\bar z')$ implies that
$\beta_1 \circ h \equiv \beta_2 \circ h$,
so $\rho_{\beta_1\circ h} = \rho_{\beta_2 \circ h}$.

We now show that this interpretation of $f'$ satisfies the
requirements.  Specifically, let $\beta : B' \to I$ be a
matching.  We must show that $T(\beta(\bar x'),f'(\beta(\bar
z'))) \in J$.  Consider the valuations $\beta_1=\beta \circ h$
and $\beta_2=\rho_{\beta \circ h}$, both belonging to $\Mat BI$,
and viewed as tuples over the relation scheme $\var B$.  Since these
two tuples agree on $Y_h$, also the two restrictions
$\beta_1[Y_hX]$ and $\beta_2[Y_hZ]$ agree on $Y_h$.  Since $X
\cap Z \subseteq Y_h$, the union $\beta_1[Y_hX] \cup
\beta_2[Y_hZ]$ is a well-defined tuple over $XY_hZ$.  Since
$\pi_{XY_hZ}(\Mat BI)$ satisfies the JD $Y_hX \Join Y_hZ$, the
union belongs to $\pi_{XY_hZ}(\Mat BI)$.  Hence, there exists a
valuation $\gamma \in \Mat BI$ that agrees with $\beta \circ h$
on $X$, and with $\rho_{\beta \circ h}$ on $Z$.  Since $(I,J)
\models Q$, we have $T(\gamma(\bar x),f(\gamma(\bar z))) \in J$.
By the preceding, $\gamma(\bar x) = \beta(h(\bar x))$ and
$\gamma(\bar z) = \rho_{\beta \circ h}(\bar z) = g(\beta(\bar
z'))$.  We conclude that $T(\beta(\bar x'),g(\beta(\bar z'))) \in
J$ as desired.

\paragraph{Proof of necessity.}
Let $V'=\var{B'}$, and
let $n$ be the arity of $f$.
For each $l \in \{0,1,\dots,n\}$ and each $u \in V'\setminus Z'$
we introduce a fresh copy of $u$, denoted by $u^l$.
We say that this fresh copy is ``colored'' with color $l$.
For each variable $u \in Z'$, we simply define $u^l$ to be $u$ itself.
We say that the variables in $Z'$ are ``colored white''.

For any tuple of variables $\bar u=(u_1,\dots,u_p)$ in $V'$,
we denote the tuple $(u_1^l,\dots,u_p^l)$ by $\bar u^l$.
In this tuple, all variables are colored $l$ or white.
We then define
$B'^l = \set{R(\bar u^l) \mid R(\bar u) \in B'}$
and view it as an instance,
i.e., the variables $u^l$ are considered to be constants.

Now define the instance $I = \bigcup_{l=0}^n B'^l$, and construct
the instance $J = Q(I)$.  By Proposition~\ref{propoca}, $(I,J)
\models Q$, where we omit the oc-assignment for the sake of
clarity.  Since $Q$ logically entails $Q'$, also $(I,J) \models
Q'$.  Hence, there exists a function $f'$ such that for each
color $l$, using the matching $\id^l: B' \to I$,  $u \mapsto u^l$,
the fact $T(\bar x'^l,f'(\bar z'^l)) = T(\bar x'^l,f'(\bar z'))$
belongs to $J$.

Since $J=Q(I)$,
we have $f'(\bar z') = f(\bar w)$ for some tuple $\bar w$ of
colored variables in $V'$.
Since the arity of $f$ is $n$ and there are $n+1$ distinct colors,
some color does not appear in $\bar w$.
Without loss of generality we may assume that this is the color 0.

Let us now focus on the fact $T(\bar x'^0,f(\bar w))$ in $J$.
Like any $T$-fact in $J$, this fact has been produced
by some matching $k\colon B \to I$
such that $T(\bar x'^0,f(\bar w)) = k(T(\bar x,f(\bar z)))$,
so
\begin{enumerate}
\item[(a)] $k(\bar x)=\bar x'^0$ and
\item[(b)] $k(\bar z)=\bar w$.
\end{enumerate}

Let $s$ denote the mapping that removes colors, i.e.,
$s(u^l)=u$ for every $u \in V'$ and every $l \in \{0,1,\dots,n\}$.
Since $s(I) \subseteq B'$, we have a homomorphism
$s \circ k\colon B \to B'$.
We now define $h := s \circ k$
and show that it satisfies the conditions required by the Theorem.
The first condition is clear since
$h(\bar x) = s(k(\bar x)) = s(\bar x'^0) = \bar x'$.

For the second condition, let $x \in X \cap Z$.
By (a), $k(x)$ is colored 0 or white.
By (b), $k(x)$ is colored non-zero or white.
Hence, $k(x)$ is colored white, i.e., $k(x) \in Z'$,
so $h(x)=s(k(x))=k(x) \in Z'$ as desired.

Finally, to show that $(B,XY_hZ)$ implies $XY_h \Join Y_hZ$ we
must establish the query containment
$$(B,XY_h) \Join (B,Y_hZ) \subseteq (B,XY_hZ).$$
Treating tableau queries as conjunctive queries, and using the
well-known containment criterion for conjunctive queries,
this amounts to showing the existence of a certain homomorphism.
More specifically, we express the query $(B,XY_h) \Join (B,Y_hZ)$
by the conjunctive query with the body $B_2 = B_0 \cup B_1$
defined as follows.
The body $B_0$ is obtained from $B$
by replacing each variable $u$ not in $Y_h$ by a fresh copy $u^0$.
For each $u \in Y_h$ we define $u^0$ simply as $u$ itself.
The body $B_1$ is obtained from $B$
by replacing each variable not in $Y_h$ by a fresh copy $u^1$.
Again, for each $u \in Y_h$ we define $u^1$ simply as $u$ itself.
To show the containment, we now must find a homomorphism $m$
from $B$ to $B_2$ such that each $u\in X - Y_h$ is mapped to $u^0$;
each $u \in Y_h$ is mapped to $u$;
and each $u \in Z-Y_h$ is mapped to $u^1$.

Thereto, we define the following mapping $m$:
\begin{itemize}
\item
if $k(u)$ is colored 0, then $m(u):=u^0$;
\item
if $k(u)$ is colored $l$ for some $l>0$, then $m(u):=u^1$;
\item
if $k(u)$ is colored white, then $m(u):=u$.
\end{itemize}
Let us verify that $m\colon B \to B_2$ is a homomorphism.
Consider an atom $R(\bar u)$ in $B$;
we must show $R(m(\bar u)) \in B_2$.
Since $k\colon B \to I$, we know that $R(k(\bar u)) \in I$.
By definition of $I$, this means that $R(k(\bar u)) = R(\bar v^l)$
for some atom $R(\bar v)$ in $B'$ and some color $l$.
So, for each variable $u$ in $\bar u$, the color of $k(u)$ is either $l$ or white.
We now distinguish two cases.
\begin{itemize}
\item
If $k(u)$ is colored white, then
$h(u)=k(u) \in Z'$ so $u \in Y_h$.
Hence, in this case, $m(u)=u=u^0=u^1$.
\item
If $k(u)$ is colored $l$, then by definition $m(u)=u^0$ when $l=0$,
and $m(u)=u^1$ when $l>0$.
\end{itemize}
We conclude that $R(m(\bar u)) = R(\bar u^0) \in B_0$ when $l=0$,
and $R(m(\bar u)) = R(\bar u^1) \in B_1$ when $l>0$.
Hence, since $B_2 = B_0 \cup B_1$, we always have
$R(m(\bar u)) \in B_2$ as desired.

It remains to verify that $m$ maps the variables in $XY_hZ$ correctly.
If $u \in Y_h$, then $h(u) = k(u) \in Z'$
so $k(u)$ is colored white and $m(u)=u$ as desired.
If $u \in X - Y_h$, then by (a), $k(u)$ is colored 0 so $m(u)=u^0$ as desired.
Finally, if $u \in Z-Y_h$,
then by (b), $k(u)$ is colored $l>0$ so $m(u)=u^1$ as desired.
\qed

\bigskip
As a corollary, we obtain that the complexity of deciding logical
entailment for sifo CQs is not worse than that of deciding
containment for classical CQs:

\begin{corollary} \label{corentail}
Testing logical entailment of sifo CQs is NP-complete.
\end{corollary}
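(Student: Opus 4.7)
The plan is to combine Theorem~\ref{theorentail} with the NP test for implication of a join dependency by a tableau query (recalled just before the theorem) for the upper bound, and to reduce classical CQ containment to sifo CQ logical entailment for the lower bound.

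For NP membership, given sifo CQs $Q$ and $Q'$ as in Theorem~\ref{theorentail}, I would nondeterministically guess a variable mapping $h\colon \var B \to \var{B'}$; since $|\var B|$ and $|\var{B'}|$ are polynomial, such a guess has polynomial size. Conditions (1) and (2) of the theorem, namely that $h$ is a homomorphism, that $h(\bar x) = \bar x'$, and that $h(X \cap Z) \subseteq Z'$, are purely syntactic and can be verified in deterministic polynomial time. For condition (3), I compute $Y_h := h^{-1}(Z')$ from $h$ and invoke the classical chase-based test for deciding whether the tableau query $(B, XY_hZ)$ implies the join dependency $XY_h \Join Y_hZ$; by the Klug--Price result recalled in the previous subsection, this test is in NP. Folding the inner polynomial-size witness for the JD implication into the same nondeterministic guess as $h$ yields a single NP algorithm.

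For NP-hardness, I would reduce from classical conjunctive query containment, which is NP-hard by Chandra--Merlin. Given two classical CQs $P_1: H(\bar u) \gets B_1$ and $P_2: H(\bar u) \gets B_2$ over a common schema, form sifo CQs
\begin{align*}
  Q_1 &= T(\bar u, f()) \gets B_1, \\
  Q_2 &= T(\bar u, f'()) \gets B_2,
\end{align*}
using fresh nullary function symbols $f, f'$ and a fresh head predicate $T$. In this setup the tuples of creation variables are empty, so $Z = Z' = \emptyset$, hence $X \cap Z = \emptyset$ and $Y_h = h^{-1}(\emptyset) = \emptyset$ for every candidate $h$. Consequently condition (2) is vacuous and condition (3) is trivially satisfied (the JD $X \Join \emptyset$ always holds). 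Condition (1) then collapses to the existence of a homomorphism $h\colon B_1 \to B_2$ with $h(\bar u) = \bar u$, which by the Chandra--Merlin theorem is exactly the condition for $P_2 \subseteq P_1$. The reduction is plainly polynomial time.

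The main subtlety is ensuring that the NP check for condition (3) composes cleanly with the outer guess of $h$: JD implication by a tableau query is itself an NP-complete problem, so one must be careful that this does not push the combined problem into a higher level of the polynomial hierarchy. Because JD implication is \emph{in} NP with polynomial-size witnesses verifiable in polynomial time, and the guess of $h$ is itself polynomial, merging both guesses keeps the overall procedure in NP, giving the tight NP-completeness bound.
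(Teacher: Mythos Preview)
Your proposal is correct and follows essentially the same approach as the paper: for NP membership you guess $h$ together with a polynomial-size witness for the JD implication (the paper phrases this witness explicitly as a homomorphism $h_0$ from $(B,XY_hZ)$ to $(B,XY_h)\Join(B,Y_hZ)$), and for NP-hardness you use the nullary-function special case to recover classical CQ containment. Your explicit remark about merging the two guesses to stay in NP is a nice clarification but not a departure from the paper's argument.
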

\begin{proof}
Membership in NP follows from Theorem~\ref{theorentail}; as a
witness for logical entailment we can use a homomorphism $h$
satisfying the first two conditions of the theorem, together with
a homomorphism $h_0$ from the query $(B,XY_hZ)$ to the query
$(B,XY_h) \Join (B,Y_hZ)$ witnessing the third condition of the
theorem.  NP-hardness follows because the problem has
containment of classical CQs as a special case, which is well
known to be NP-hard.
Indeed, logical entailment of a sifo $Q'$ by a sifo $Q$,
in the special case where the creation
functions of $Q$ and $Q'$ are nullary, amounts to classical containment
of $Q$ in $Q'$ when we ignore the function terms in the heads.
\end{proof}

\subsection{From oid-equivalence to logical entailment}

Let $Q$ and $Q'$ be sifo CQs of the
general forms (\ref{eqn:normal-form-one})
and (\ref{eqn:normal-form-two}) from
Subsection~\ref{sec:charoid}.  From our main Theorems
\ref{theoroid} and \ref{theorentail}, we can conclude the
following.

\begin{theorem}
\label{oid2log}
If $Q$ and $Q'$ are oid-equivalent, then $Q$ logically entails
$Q'$.
\end{theorem}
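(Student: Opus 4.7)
The plan is to combine the two main characterization theorems already established. By Theorem~\ref{theoroid}, oid-equivalence of $Q$ and $Q'$ (in the normal form of Subsection~\ref{sec:charoid}) yields a permutation $\pi$ of $Z-X$ such that the classical CQs $\fltg Q^\pi$ and $\fltg Q'$ are equivalent. The classical containment theorem for CQs then provides a homomorphism $h\colon \fltg Q^\pi\to\fltg Q'$, which unfolds into a homomorphism $h\colon B\to B'$ satisfying $h(\bar x)=\bar x$ and $h(\pi(\bar z))=\bar z$. I will take this $h$ as the witness required by Theorem~\ref{theorentail} to conclude that $Q$ logically entails $Q'$.

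The three conditions of Theorem~\ref{theorentail} then need to be checked. Condition~(1), $h(\bar x)=\bar x'$, is immediate from $h(\bar x)=\bar x$ together with the normal-form equality $\bar x=\bar x'$; it also forces $h$ to be the identity on $X$. Since $Z=Z'$ in the normal form, $h(X\cap Z)=X\cap Z\subseteq Z'$, which gives condition~(2). For condition~(3), the key observation is that $\pi$, viewed as a map on $Z$, is a bijection (it fixes $X\cap Z$ and permutes $Z-X$), so $h(\pi(\bar z))=\bar z$ forces $\restr{h}{Z}=\pi^{-1}$; in particular $h(Z)=Z=Z'$, hence $Z\subseteq Y_h$. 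Consequently $Y_hZ=Y_h$, so the join dependency $XY_h\Join Y_hZ$ collapses to $XY_h\Join Y_h$, which is trivially satisfied by every relation over $XY_h$. Thus condition~(3) holds vacuously and Theorem~\ref{theorentail} applies.

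No serious obstacle is anticipated: once the two characterization theorems are in hand, the argument is essentially a translation between their syntactic witnesses. The only mildly subtle step is recognising that the permutation $\pi$ supplied by the oid-equivalence test is precisely what forces the creation variables to sit inside $Y_h$, which in turn trivialises the join-dependency condition of Theorem~\ref{theorentail}.
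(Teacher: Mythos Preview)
Your proposal is correct and follows essentially the same approach as the paper: invoke Theorem~\ref{theoroid} to obtain the permutation $\pi$ and the homomorphism $h\colon \fltg Q^\pi\to\fltg Q'$, then verify the three conditions of Theorem~\ref{theorentail}, with the key observation that $h(Z)=Z=Z'$ forces $Z\subseteq Y_h$ and hence trivialises the join-dependency condition. The only difference is that you spell out $\restr{h}{Z}=\pi^{-1}$ explicitly, whereas the paper argues directly from $h(\pi(\bar z))=\bar z$ and $\pi(Z)=Z$.
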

\begin{proof}
By Theorem~\ref{theoroid}, there exists
a permutation $\pi$ of $Z-X$ such that
$\fltg Q^\pi$ and $\fltg Q'$ are equivalent.
Hence there is a homomorphism $h : \fltg Q^\pi \to \fltg Q'$.
Clearly $h : B \to B'$.
We verify that $h$ satisfies the conditions of
Theorem~\ref{theorentail}, thus showing that $Q$ logically
entails $Q'$.
\begin{enumerate}
\item
Since $h$ maps the head of
$\fltg Q^\pi$ to the head of $\fltg Q'$, we have $h(\bar x)=\bar
x$ and $h(\pi(\bar z)) = \bar z$.
Since $\bar x'=\bar x$, we have
$h(\bar x) = \bar x'$ as desired.
\item
Since $h$ is the identity on $X$, we have
$h(X \cap Z) = X \cap Z \subseteq Z = Z'$ as desired.
\item
Since $h(\pi(\bar z))=\bar z$ and $\pi(Z)=Z$, we have $h(Z)=Z=Z'$.
Hence $Z \subseteq Y_h$.  But then the join dependency $X Y_h \Join Y_h Z$
becomes $X Y_h \Join Y_h$ which trivially holds.
\qedhere
\end{enumerate}
\end{proof}

\section{Discussion} \label{secdiscuss}

The results in this paper provide an understanding of the
notions of oid-equivalence and logical entailment for sifo CQs.
Sifo CQs, however, form a very simple subclass of oCQs.
Moreover, oCQs themselves are rather limited, for example, they
consist of a single rule and the rule can have only one atom in
the head.  Thus there are at least three natural directions for further
research: (i) allowing more than one function in the head; (ii)
allowing more than one atom in the head; (iii) allowing more than
one rule.

\paragraph{Containment}
Furthermore, in addition to oid \emph{equivalence} of oCQs, it
would be natural to also investigate a notion of
oid-\emph{containment}.  There are actually at least two
reasonable ways to define such a notion.  The situation is
similar to that in research on CQs with counting or bag semantics
\cite{cns_contagg,cohen_multiplicities}.  Most of the known results
are for equivalence only, with the extension to containment
typically an open problem.  Indeed, our characterization of
oid-equivalence for sifo CQs relies on equivalence of CQs with
bag semantics. An extension to oid-containment will likely need a
similar advance on containment of CQs with bag semantics.

\paragraph{Sifo CQs and ILOG}  In the introduction we mentioned
that sifo CQs, and oCQs in general, are a fragment of ILOG
without recursion \cite{hy_ilog}.  Sifo CQs belong to the
subclass of the class of recursion-free ILOG programs ``with
isolated oid creation'' \cite{hy_pods91}.  For this class,
oid-equivalence was already known to be decidable.  This was
shown by checking all finite instances up to some exponential
size.  Hence, our NP-completeness result for oid-equivalence of
sifo CQs does not follow from the previous work.  More generally,
the decidability of oid-equivalence for general recursion-free
ILOG programs, or already of oCQs for that matter, is a
long-standing open question.  Various interesting examples
showing the intricacies of this problem have already been given
by Hull and Yoshikawa \cite{hy_pods91}.

\paragraph{Sifo CQs and nested dependencies} In
Section~\ref{secnested} we also presented sifo CQs, now viewed as
schema mappings, as a very simple subclass of nested tgds.  The
implication problem for general nested tgds was shown to be
decidable by Kolaitis et al.~\cite{pichler_nestedep} in work done
independently from the present paper.  Nevertheless our
characterization of implication for sifo CQs, given by
Theorem~\ref{theorentail}, does not follow from the general
decision procedure for nested tgds.  Instead, the general
procedure, when applied to two sifo CQs, is strikingly similar to
our proof of necessity of our Theorem.  Using the notation from
that proof, the general procedure applied to test implication of
sifo CQ $Q'$ by sifo CQ $Q$ would amount to testing for the
existence of a homomorphism $h$ from $\{T(\bar x'^l,f'(\bar z'))
\mid l=0,\dots,n\}$ to $Q(I)$.  Since $Q(I) = \{T(\alpha(\bar
x),f(\alpha(\bar z))) \mid \alpha : B \to I\}$, this can be
implemented by guessing $h$ and $n+1$ matchings $\alpha_l : B \to
I$ such that $(h(\bar x'^l),f'(h(\bar z'))) = (\alpha_l(\bar
x),f(\alpha_l(\bar z)))$ for $l=0,\dots,n$.  In contrast, as
explained in Corollary~\ref{corentail}, our characterization
involves guessing just two homomorphisms.

\paragraph{Sifo CQs and plain SO-tgds}
As described in Section~\ref{seclog},
sifo CQs are a very simple subclass of plain SO-tgds.  
For plain {SO-tgds}, deciding logical equivalence is again an open problem.
Also, the notion of oid-equivalence, defined in this paper for oCQs,
can be readily extended to plain SO-tgds.  
We illustrate some difficulties involved in allowing
multiple functions in the head, which is indeed allowed in plain
SO-tgds.  First, consider the oid-equivalence problem.  For sifo
CQs we have shown in Section~4.4 of this paper that, as far as oid-equivalence
is concerned, only the \emph{counts} of generated oids per
tuple are important.  Now consider the following pair of oCQs:
\begin{align*}
Q & = T(x,f(y),g(x,z)) \gets R(x,y), R(x,z) \\
Q' & = T(x,f(y),g(x,y)) \gets R(x,y), R(x,z)
\end{align*}
Both queries create the same number of new $f$-oids and $g$-oids
per $x$-value, but now it also becomes important how these oids
are paired.  In $Q$ more pairs are generated for each $x$, and
the two queries are not oid-equivalent.  So, in the case of multiple
functions, also the interaction between the multiple terms needs
to be taken into account in some way.

A similar comment applies to the problem of
logical equivalence.
It is not immediately clear how the join dependency condition
of Theorem~\ref{theorentail} should be generalized in the
presence of multiple functions.  Consider,
for example, the following:
\begin{gather*}
Q  =
 T(x,f_1(z_1,y_1),f_2(z_2,y_2)) \gets
R(x,z_1,z_2), S(z_1,y_1), S(z_2,y_2)
\\
Q' =
 T(x,g_1(u),g_2(u)) \gets
R(x,u,x), R(x,x,u), S(u,v_1), S(x,v_2)
\end{gather*}
The $f_1$-part of $Q$ (ignoring the third component in the head)
logically entails the $g_1$-part of $Q'$, and likewise the
$f_2$-part of $Q$ (ignoring the second component in the head)
logically entails the $g_2$-part of $Q'$.  Globally, however, $Q$
does not logically entail $Q'$; this can be seen by the 
instances shown in Table~\ref{twofunfig}, which satisfy
$Q$ but not $Q'$.

\begin{table}
$$
\begin{array}[t]{|lll|}
\multicolumn{3}{c}{R} \\
\hline
2 & 1 & 2 \\
2 & 2 & 1 \\
3 & 1 & 3 \\
3 & 3 & 1 \\
\hline
\end{array}
\qquad
\begin{array}[t]{|ll|}
\multicolumn{2}{c}{S} \\
\hline
1 & 4 \\
2 & 5 \\
3 & 6 \\
\hline
\end{array}
\qquad
\begin{array}[t]{|lll|}
\multicolumn{3}{c}{T} \\
\hline
2 & 7 & 8 \\
2 & 8 & 7 \\
3 & 7 & 9 \\
3 & 9 & 7 \\
\hline
\end{array}
$$
\caption{Instances used to illustrate logical entailment in the
presence of multiple functions.}
\label{twofunfig}
\end{table}

A related interesting question then is whether
Theorem~\ref{oid2log}, that oid-equivalence implies logical
entailment, still holds for plain {SO-tgds}.  When we allow nested
function terms in the head (which goes beyond plain {SO-tgds})
the implication breaks down, as shown by the following example
\cite[Example 3.8]{nash_optimization}:
\begin{align*}
Q &= T(x,f(x),g(f(x))) \gets S(x) \\
Q' &= T(x,f(x),g(x)) \gets S(x)
\end{align*}
Here $Q$ and $Q'$ are oid-equivalent, and $Q$ logically entails
$Q'$, but $Q'$ does not logically entail $Q$.


\section*{Acknowledgment}

We thank the anonymous referees for their careful comments which helped
improve the presentation of the paper.

\end{document}